\documentclass[twocolumn,amsmath,amssymb,superscriptaddress,prl,floatfix,reprint]{revtex4-2}
\usepackage{graphics,amssymb,amsmath,epsfig,color,textgreek}
\usepackage{amsthm}

\usepackage{graphicx}
\usepackage[dvipsnames]{xcolor}
\usepackage{dcolumn}
\usepackage{bm}
\usepackage[colorlinks=true,citecolor=cyan]{hyperref}
\hypersetup{colorlinks=true,citecolor=cyan,linkcolor=red,urlcolor=magenta}
\usepackage{braket}
\usepackage[normalem]{ulem}
\usepackage{cancel}
\usepackage{xfrac} 
\usepackage{amsmath}
\usepackage{amssymb}
\usepackage{amsthm}
\usepackage{svg}
\usepackage{lipsum}
 
\usepackage[capitalise]{cleveref}
\crefname{section}{Sec.}{Secs.}
\crefname{table}{Tab.}{Tabs.}
\crefname{figure}{Fig.}{Figs.}
\crefname{definition}{Def.}{Defs.}
\crefname{lema}{Lem.}{Lems.}
\crefname{theorem}{Thm.}{Thms.}
\crefname{corollary}{Cor.}{Cors.}

\newtheorem{theorem}{Theorem}
\newtheorem*{theorem-non}{Theorem}

\newcommand{\ham}{\mathcal{H}}

\newcommand{\Z}{\mathcal{Z}}

\bibliographystyle{apsrev4-2}
\begin{document}

\title{Denoising and Extension of Response Functions in the Time Domain}

\author{Alexander F. Kemper}
\email{akemper@ncsu.edu}
\affiliation{Department of Physics, North Carolina State University, Raleigh, North Carolina 27695, USA}

\author{Chao Yang}
\email{cyang@lbl.gov}
\affiliation{Computational Research Division, Lawrence Berkeley
National Laboratory, Berkeley, CA 94720, USA}

\author{Emanuel Gull}
\email{egull@umich.edu}
\affiliation{Department of Physics, University of Michigan, Ann Arbor, Michigan 48109, USA}

\date{\today}

\begin{abstract}
Response functions of quantum systems, such as electron Green's functions, magnetic, or charge susceptibilities, describe the response of a system to an external perturbation. They are the central objects of interest in field theories and quantum computing and measured directly in experiment. Response functions are intrinsically causal. In equilibrium and steady-state systems, they correspond to a positive spectral function in the frequency domain.
Since response functions define an inner product on a Hilbert space and thereby induce a positive definite function, the properties of this function can be used to reduce noise in measured data and, in equilibrium and steady state, to construct positive definite extensions for data known on finite time intervals, which are then guaranteed to correspond to positive spectra.
\end{abstract}\maketitle

\emph{Introduction.}
Response functions are critical for the understanding of physics
in a wide variety of contexts.  They
are a natural framework for considering 
 \emph{dynamical} properties
that involve excitations \cite{AGD75,Mahan,bruus,stefanucci_nonequilibrium_2013}.  Response functions are
also measured in experimental setups ranging from low-frequency
THz conductivity to magnetic susceptibilities and photoemission
spectroscopy. 
A large body of literature has been devoted to the study
of response functions, and 
a number of ``field'' theoretical approaches
avoid the calculation of eigenstates entirely, and instead cast the
formalism in terms of response functions, called
correlation or Green's functions in this context. These include embedding techniques such
as the dynamical mean field theory (DMFT) \cite{Metzner1989,Georges1992,georges1996dynamical} and its cluster extensions \cite{Hettler1998,Lichtenstein2000,Kotliar2001,maier2005quantum}, Monte Carlo approaches for lattice  \cite{Blankenbecler1982} and impurity \cite{Hirsch1986,Rubtsov2005,Werner2006,Gull2008,Gull11RMP} models, self-consistent partial summation methods for real materials and model systems \cite{Hedin1965,Bickers1989,Yeh2022}, and
non-equilibrium Green's function methods \cite{stefanucci_nonequilibrium_2013,Muhlbacher08,Werner09,Gull11BoldNoneq,Aoki2014,Schueler20,dong2022excitations,NunezFernandez22,Erpenbeck23}.

An important characteristic of correlation functions lies in their analytical properties. The retarded correlation functions have no content
at negative times due to causality\cite{AGD75,Mahan,bruus}. In the complex frequency domain, they correspond to so-called Nevanlinna functions \cite{Adamyan2009}  whose
 poles are restricted to the lower half of the complex plane.
This analytical framework has long been utilized to evaluate integrals that emerge in many-body theory \cite{AGD75,Mahan,bruus}, such as those occurring in the context of warm dense matter \cite{Vorberger12} and uniform electron liquids \cite{Filinov23}.
More recently, it was used to 
perform analytic continuation from a Wick-rotated frame to a
standard frame. Where traditional approaches that rely e.g. on the maximum
entropy method \cite{Jarrell1996} have significant uncertainty in the final result,
resulting in washed out spectra, explicitly enforcing the highly constraining analytical properties of Nevanlinna functions
results in a drastic reduction of the uncertainty, leading to sharp spectral functions \cite{fei2021nevanlinna,Fei2021Caratheodory}.
It is clear from these examples that encoding this mathematical structure into numerical algorithms can be used
to great benefit.

In this Letter, we analyze a fundamental property of correlation functions in the time domain: several correlation functions of interest are {\emph{positive definite functions}} of their time arguments.  This property arises from viewing correlation functions
as an inner product in the vector space of operators, combined with the fact
that the time translation operator is a unitary representation of the time translation
group.
This positive definiteness sets
a strong constraint on the correlation function, similar to its other analytic properties --- in fact, some of these directly follow from the positive definiteness \cite{Bochner1932}. 
Moreover,
insisting that a correlation function is positive definite
enables both the extension of numerical correlation functions
to later times, and the extraction of clean spectra from noisy data such as that obtained from
Monte Carlo and Quantum Computing approaches.

We consider a quantum system in a quantum state $\rho$.  In a second quantized formalism, quantum states can be described as states in Fock space \cite{AGD75,Mahan,bruus}. The energetics (and dynamics) of the system is described by a Hamiltonian $\ham$; in the Heisenberg picture, the time-evolution of an operator $A$ is given by $A(t)=e^{i\ham t}Ae^{-i\ham t}$.
Expectation values of operators for the state $\rho$ are computed as $\langle A(t)\rangle=\text{Tr}[\rho A(t)]$. In a system with time-translational invariance, $\ham$ commutes with $\rho$, and in the canonical ensemble $\rho=e^{-\beta \ham}/\Z$, with $\beta$ denoting the inverse temperature and
$\Z \equiv \mathrm{Tr} e^{-\beta \ham}$.

We are primarily interested in time-dependent single-particle correlation functions of the type
\begin{align}
    G_{AB}(t,t')=\langle A^\dagger(t) B(t')\rangle.
\end{align}
In a fermion system with creation operators $c_i^\dagger$ creating particles in state $i$, the so-called ``lesser'' Green's function \cite{stefanucci_nonequilibrium_2013}
$G_{ij}^{<}(t,t')=i\text{Tr}\left[\rho c_j^\dagger(t')c_i(t)\right]$ and the ``greater'' Green's function
$G_{ij}^{>}(t,t')=-i\text{Tr}\left[\rho c_i(t)c_j^\dagger(t')\right]$, as well as the charge and spin correlation functions are related to correlation functions of this type. Below, we consider these functions without their usual $\pm i$ prefactors.


\emph{Mathematical exposition.}
In this section, we will show that $G_{AB}(t,t')$ are
\emph{positive definite} functions when $A=B$.
First, we observe that the correlation function
can be viewed as an inner product of the linear operators that act on Fock space $V$ (we denote this vector
space as $\mathcal{L}(V)$). This fact was also noted in Ref.~\cite{Hyrkas22}, where it was used to construct strictly positive perturbative approximations. We define an inner product on $\mathcal{L}(V)$ as
\begin{align}
    \langle A, B \rangle := \mathrm{Tr} \left[ \rho A^\dagger B \right],
    \label{eq:innerproduct}
\end{align}
which is conjugate symmetric, linear in the second component, and positive for $\langle A,A \rangle$ when $A\neq 0$. 
The properties of conjugate symmetry and linearity are straightforward to verify. The positivity property $\langle A,A\rangle\geq 0$ can be established by noting that (i) if a square matrix $M$ equals the multiplication of a matrix with its Hermitian transpose, i.e., $M=AA^\dagger$, then $M$ is a Hermitian positive semi-definite matrix; and (ii) the trace of the product of two positive semi-definite matrices $M$ and $N$ is always greater than or equal to zero, i.e., $\text{Tr} (MN)\geq 0$. 
We further restrict our consideration to the
cases where $A,B$ are not orthogonal to
$\rho$, which would yield a zero expectation
value and is thus not physically relevant for correlation functions.
$\mathcal{L}(V)$ together with the inner product of Eq.~\ref{eq:innerproduct} forms a Hilbert space.

The correlation functions are defined as complex-valued two-time functions ($\mathbb{R} \times \mathbb{R} \rightarrow \mathbb{C}$):
\begin{align}
    G_{AA}(t,t') = \mathrm{Tr} \left[ \rho A(t)^\dagger A(t') \right].
\end{align}
Given the inner product, we can show that the Green's functions
arising in the study of equilibrium and non-equilibrium
dynamics of many-body physics \cite{AGD75,Mahan,bruus,stefanucci_nonequilibrium_2013} are positive definite functions \cite{Sasvari94} of two variables.
That is, given any (finite) set of arbitrarily spaced time points $t$, the eigenvalues of the matrix that result from evaluating the Green's function at those points, $\left[G_{AA}(t_i,t_j)\right]$, are all positive or zero \footnote{Note that in the standard terminology that positive definite functions include zero eigenvalues \cite{Sasvari94}}.

We prove this by using an equivalent definition of positive definiteness, which is
that, given the aforementioned set of time points $t$,
\begin{align}
    \sum_{ij} G_{AA}(t_i,t_j) & \lambda_i^* \lambda_j \ge 0
    \label{eq:psd}
\end{align}
for any set of $\lambda_i$. This follows from
\begin{align}
\sum_{ij} &G_{AA}(t_i,t_j)  \lambda_i^* \lambda_j 
    = \sum_{i,j}  \mathrm{Tr} \left[ \rho A(t_i)^\dagger A(t_j) \right] \lambda_i^* \lambda_j \nonumber \\ \nonumber
    &= \mathrm{Tr} \left[ \rho \left( \sum_i \lambda_i A(t_i) \right)^\dagger 
    \left( \sum_j \lambda_j A(t_j) \right) \right] \\
    &= \bigg\langle  \sum_i \lambda_i A(t_i),  \sum_j \lambda_j A(t_j)\bigg \rangle \ge 0.
\end{align}
The correlation functions (and Green's function as well as
self-energies \cite{Balzer14,Gramsch15})
are therefore positive definite functions.
This holds for any operator $A \in \mathcal{L}(V)$,
and in particular it holds for fermionic
creation/annihilation operators $c^\dagger/c$, for densities $n$, for magnetization operators $n_\uparrow - n_\downarrow$, and trivially for the identity. In multi-orbital systems, the diagonal components of electronic Green's 
functions $G^{\lessgtr}_{ii}$
are positive semi-definite. 
Off-diagonal 
components can be constructed from 
linear combinations of $\langle c_i + c_j, c_i + c_j \rangle$, $\langle c_i + i c_j, c_i + i c_j \rangle$, 
and the diagonal components, all
of which are positive definite.

In the presence of time-translation invariance, ({\it i.e.} for steady-state and equilibrium systems), the 
Green's function becomes a function of a single time argument corresponding to the time difference: $G_{AA}(t-t')$,
i.e. a positive definite function of a single variable where
\begin{align}
    \sum_{ij} G_{AA}(t_i-t_j) & \lambda_i^* \lambda_j \ge 0.
    \label{eq:psd_ttinv}
\end{align}
In order for time translation to hold, the set of time points $t$  must form a group under
addition such as $\mathbb{R}$ or $\mathbb{Z}$.
With this additional structure, and as long as $\rho$ commutes
with the time evolution operator, 
\begin{align}
\langle A(t), B(t') \rangle 
=  \langle A(t-t'),  B \rangle 
= \langle A, B(t'-t) \rangle
\end{align}
follows from the cyclicity of the trace. Additional mathematical details are provided in the supplement.

The positive definiteness of the Green's function sets a strong constraint on the function and has important practical consequences, which we will explore and exploit in the remainder of the paper. In short, we will show that
(i) we can improve the signal-to-noise ratio in noisy data from experiment and theory,
and
(ii) we can construct causal extensions (with a positive spectrum) from short time data.

\begin{figure*}[!t]
    \centering
    \includegraphics[width=\textwidth]{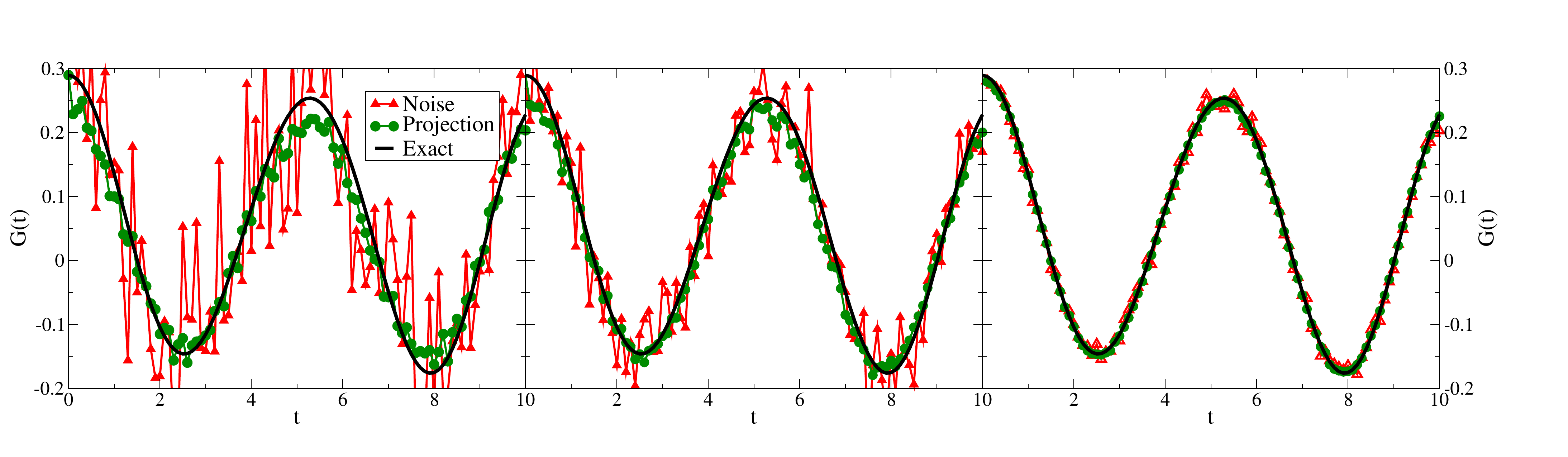}\vspace{-0.75cm}
    \caption{Real part of the on-site Hubbard dimer Green's function $G^>_{11\uparrow}(t)$. Red triangles: Data polluted by Gaussian noise with $\sigma=0.1$ (left panel), $\sigma=0.05$ (middle), and $\sigma=0.01$ (right). Green filled circles: Projection to nearest causal function. Black circles: Exact solution. Other parameters are: temperature $T = 1/10$, hopping $t = 1$, interaction $U = 5$, level energy $\epsilon=2.3$.
    }
    \label{fig:MCNoise}
\end{figure*}

\emph{Denoising correlation functions.}
A first, and natural application of the mathematics presented above, is to take a correlation function
from a source that has some inherent noise, and to use the positive definite
property to project its values to the nearest positive definite correlation function; in effect, denoising the data. Noisy
correlation functions can arise from Monte Carlo evaluations, from experimental measurements, or from simulations on
quantum computers. 

We consider a discretized, time-translation invariant correlation function on a regularly spaced time axis $t$, $G_{AA}(t_i-t_j)$.
If we label the elements of the correlation function as
$G_{AA}(t_i-t_j) \rightarrow f_{i-j}$, then
the resulting matrix $\underbar G$ with entries $\underbar G_{ij}=f_{i-j}$ is a positive semi-definite Hermitian Toeplitz matrix \cite{Sasvari94},
\begin{align}
\underbar G &=
\begin{pmatrix}
    f_0 & f_1 & f_2 & \cdots & f_n \\
    f_1^* & f_0 & f_1 &  \cdots & f_{n-1} \\
    f_2^* & f_1^* & f_0 &  \cdots & f_{n-2} \\
    \vdots & & & \ddots & \vdots \\
    f_n^* & f_{n-1}^* & f_{n-2}^* & \cdots & f_0
\end{pmatrix}.
\label{eq:gmatrix}
\end{align}
$\underbar G$ is commonly known as the Gramian or ``Gram'' matrix.

In the presence of noise in the correlation function data ($f_j$), this matrix is not positive semi-definite (PSD).
However, an alternating projection to the nearest PSD matrix \cite{Higham88} followed by projections to the nearest Toeplitz matrix \cite{Eberle03} and an enforcement of the value at time zero (which is typically known precisely, {\it e.g.} from equal-time measurements or sum rules) results in quick convergence to a positive definite function. 

The projection to the nearest PSD matrix is achieved by diagonalizing $G$ and setting all negative eigenvalues to zero. The projection to the nearest Toeplitz matrix averages entries diagonally, and the enforcement of the norm consists of fixing the diagonal to a predetermined value. While an alternating projection typically converges in less than 100 iterations, faster converging schemes, see {\it e.g.} \cite{Bauschke96,Bauschke2002PhaseRE} and \cite{Lewis09}, may be substantially more efficient; we have not explored them here. 

We illustrate the denoising in Fig.~\ref{fig:MCNoise} for
a positive definite Green's function 
of the Hubbard dimer which has been polluted with Gaussian noise.
Enforcing positive definiteness by the procedure above shows a dramatic improvement of the quality of the data. Details and a second application to state-of-the-art quantum Monte Carlo data with a continuous spectrum \cite{Erpenbeck23} are presented in the supplement.

\emph{Extending correlation functions to longer times.}
A second consequence of positive definiteness is that 
positive definite extensions of response functions to longer times exist, from which spectral measures can be constructed. We make use of two well-known mathematical facts for positive definite functions. First, functions that are only known on a subset of their domain (e.g. $\mathbb{Z}$ or $\mathbb{R}$) have
at least one extension to the full domain \cite{CaratheodoryExtension1911,krein1940problem,Sasvari2006}. This is a consequence of the extension theorems of Kre{\u \i}n (in the case of $\mathbb{R}$) and Carath\'eodory (in the case of $\mathbb{Z}$).
Second, according to Bochner's theorem \cite{Bochner1932}, the Fourier transform of a positive definite function (over $\mathbb{Z}$ or $\mathbb{R}$) is guaranteed to have positive real part; moreover, the inverse Fourier transform of a positive spectral function is guaranteed to be positive definite. 
In the discrete case, these extensions may not be unique; in the continuum case, uniqueness is guaranteed by the analyticity of the Green's function. 

Conceptually, this realization offers a straightforward methodology to obtain causal spectral functions from finite-time data: in a first step, the function is extended from short to long times by taking advantage of an extension theorem. In a second step, the uniquely defined and positive Fourier transform of the extended data is computed.

Numerically, we can use extension theorems to predict Green's function values at later times from known values at short times \cite{Sasvari94}. Supplementing a known time series $f_0, \cdots, f_n$ by a single unknown element $f_{n+1}$,
the Gramian of Eq.~\ref{eq:gmatrix} is a Toeplitz matrix with a single unknown complex number 
$f_{n+1} = \underbar G_{0,n+1} = (\underbar G_{n+1,0})^*$.
The real and imaginary parts of $f_{n+1}$ can then be obtained in a two-dimensional search in the complex plane for the region where the lowest eigenvalue of $G$ is zero or larger, i.e. $G$ is positive definite; such a value must exist \cite{CaratheodoryExtension1911} and $|f_{n+1}|\le f_0$ 
due to the positive definiteness \footnote{We thank Z. Sasv\'ari for suggesting this approach}. In practice, we find that in systems with only few well-defined excitations, and consequentially a low-rank Gram matrix, the search converges quickly to a unique solution, whereas systems with continuous spectra may present ambiguities in the extensions. 
Fig.~\ref{fig:extension} demonstrates
the approach for the Green's function of the Hubbard dimer (for details see supplement). Once the data is extended to all times, it yields a unique spectral function \cite{Bochner1932}.
\begin{figure}[tbh]
    \includegraphics[angle=270,width=\columnwidth]{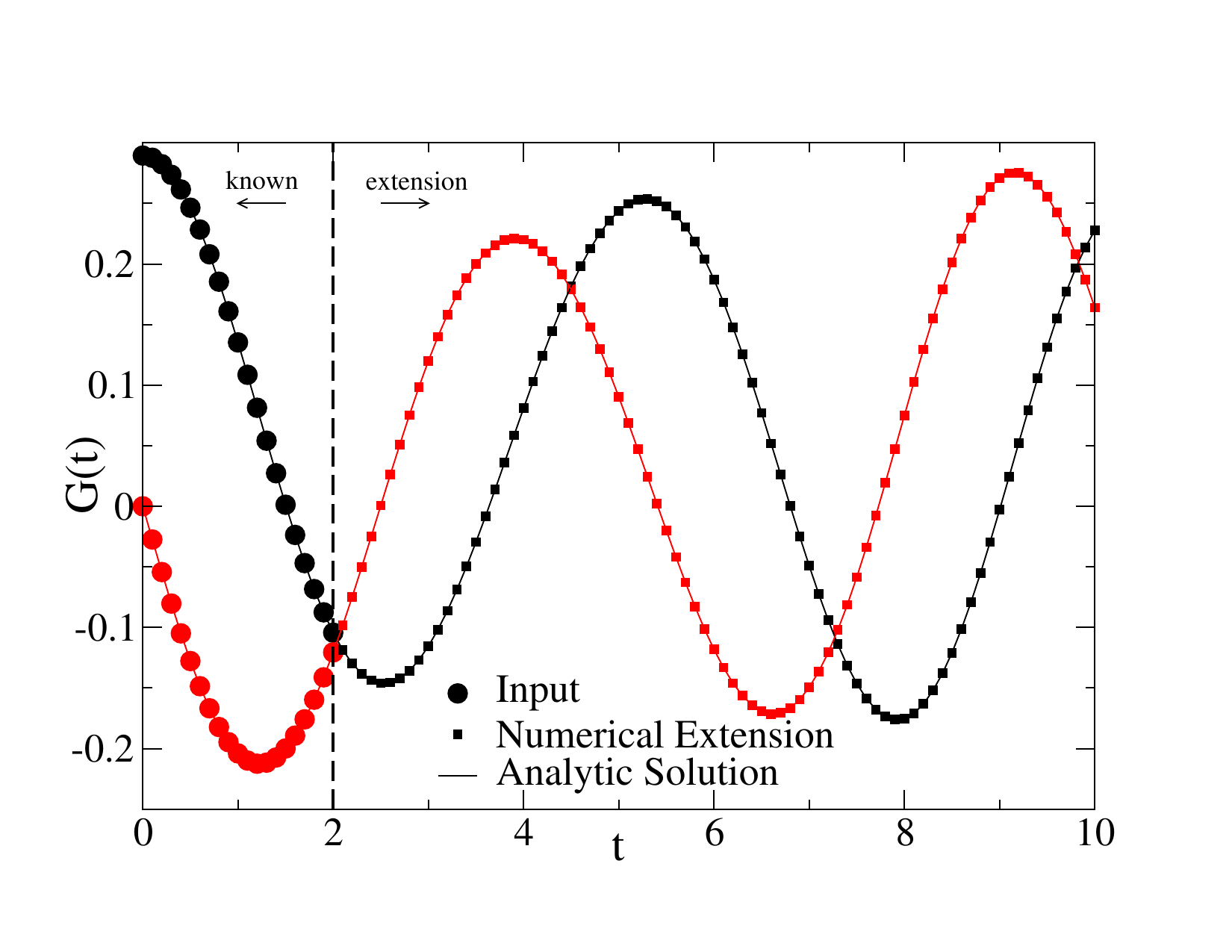}
    \caption{Extension of the on-site Hubbard dimer Green's function $G^>_{11\uparrow}(t)$. Shown are real (black) and imaginary (red) parts of input data up to $t=2$ (circles), the numerically computed extension from $t=2$ to $t=10$ (squares), and the analytically known Green's function up to $t=10$ (lines). }
    \label{fig:extension}
\end{figure}

More generally, given that the Gram matrix Eq.~\ref{eq:gmatrix} is a Toeplitz matrix of size $n$, and assuming a rank of size $r$, a classical result by Carath\'{e}odory and Fej\'{e}r \cite{CaratheodoryFejer1911} guarantees the existence of the decomposition $T=APA^\dagger$, where $A$ is a $n\times r$ Vandermonde matrix and $P$ is a $r\times r$ positive diagonal matrix. The columns of $A$ can be interpreted as uniformly sampled oscillation frequencies, and the entries of $P$ as positive `pole strengths'. This approach therefore gives both access to the spectral function in terms of a series of $r$ discrete poles at given frequencies, and a method to construct extensions for all times by enlarging the matrix $A$ with additional frequencies. The decomposition also establishes a connection to signal processing and control theory, where PSD covariance matrices of time-invariant processes are analyzed using this decomposition in super-resolution algorithms such as the multiple signal classification (MUSIC) \cite{Schmidt1986}, and where thereby approximate extensions from noisy data can be constructed. We will explore this connection, together with connections to reproducing kernel Hilbert spaces \cite{paulsen_raghupathi_2016}, in a future paper.

\emph{Application to Quantum Computing.}
As a final illustration, we apply the
denoising and extension implications of
Eq.~\ref{eq:psd_ttinv}
to a noisy correlation
function measured on
IBM's quantum computer  {\em ibm\_auckland}. 
We have measured the momentum-space
greater Green's function for the empty state
$G_k(t) := \langle 0 | c^\dagger_k(t) c_k | 0 \rangle$
for an 8-site Su-Schrieffer-Heeger
model, which is a model for free electrons with a hopping parameter that alternates with
an amplitude $\delta$
\begin{equation}
\ham = - V_{nn}\sum_i \left[1+(-\delta)^i\right] c^\dagger_i c_{i+1} + \mathrm{H.c} - \mu\sum_i c^\dagger_i c_i,
\end{equation}
where we set $V_{nn}=1$.
The raw data (originally published in Ref.~\onlinecite{kokcu2023linear}), is shown in Fig.~\ref{fig:cleaned_dataset_0.4}, in the gapped phase with $\delta=0.4$ at $k=\pi/2$. We report details of the calculation 
in the Supplementary information.
Because the model is translationally invariant and has 2 bands, the momentum basis Green's functions should exhibit at most 2 frequencies.
However, due to the hardware noise from the quantum computer, there is significant noise
in the time domain signal.  This is similarly reflected in the Fourier transform, where peaks
can be identified, but only in the power spectrum --- the spectral function (shown inFig.~\ref{fig:cleaned_dataset_0.4}) does not show the requisite analytic structure (a consistent sign across all frequencies).
We de-noise the data by asserting the positive
definiteness and performing a point-by-point optimization on the Green's function, where
the cost function is the square of the negative eigenvalues, and obtain an improved
spectrum. As can be seen in the Figure, much of the 
broad spectrum noise has disappeared and the peaks are clearly visible in the spectral function. We detail the procedure and show the
result on additional data from
different $k$ points in the supplement; we find that this iterative procedure is more suitable in the presence of extreme noise than the alternating projection scheme used for Monte Carlo data.
The final step is to extend the data as discussed above, which results in a long positive definite signal, with sharp peaks in the spectrum.

Note that applying a PSD projection as done here is 
conceptually distinct from recent efforts 
using advanced signals processing approaches 
to extract
a cleaner signal from noisy quantum simulation
\cite{cruz2022super,steckmann2023mapping,shen2023estimating}, and in fact can be used as a
complementary tool.

\begin{figure}[b]
    \centering
    \includegraphics[width=0.98\columnwidth]{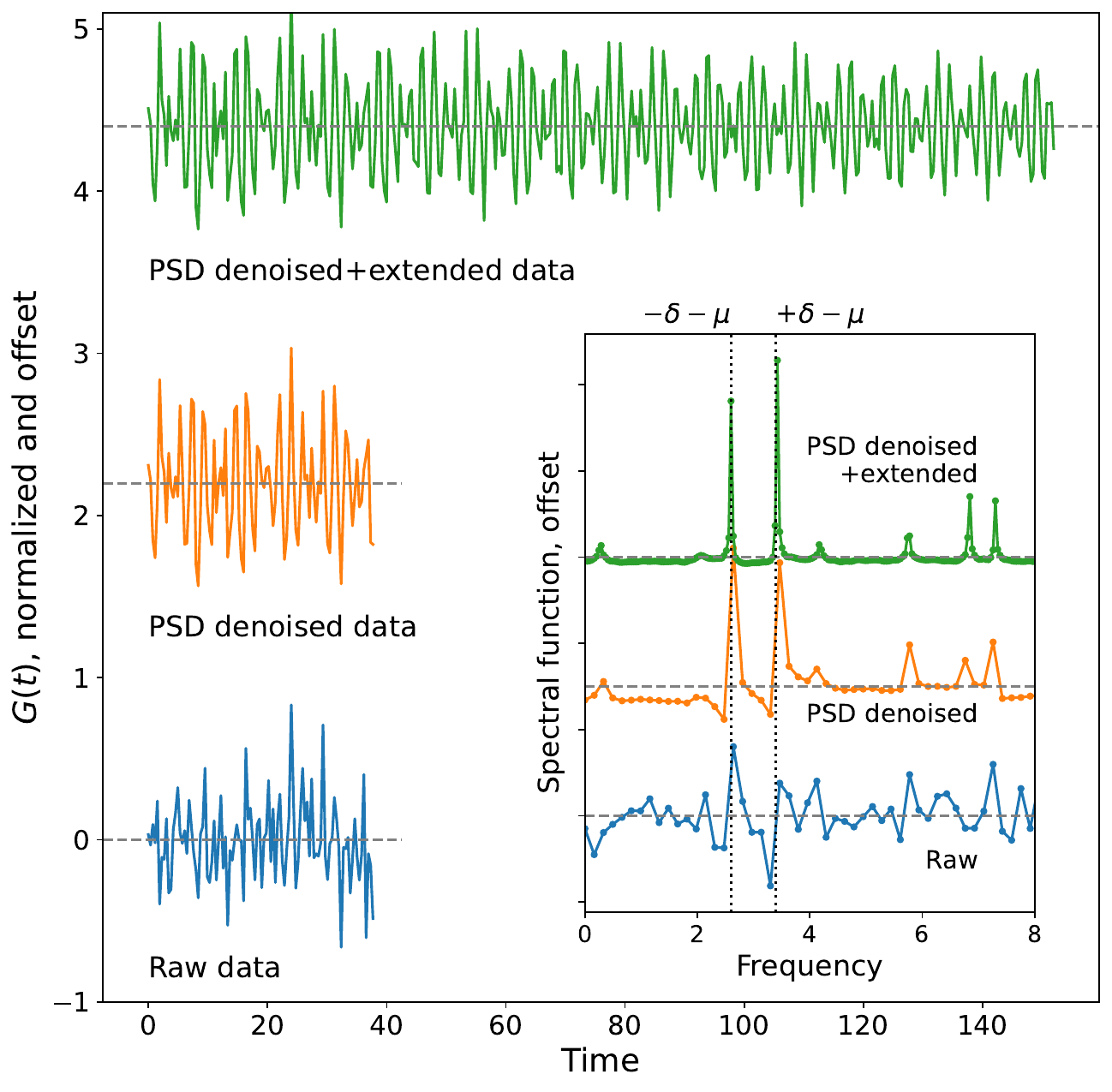}
    \caption{Main panel: greater Green's function $G_k^>(t)$ at $k=\pi/2$ of an 8-site Su-Schieffer-Heeger with $\delta=0.4$ and $\mu=-3$. The figure shows the raw data obtained from the {\em ibm\_auckland} quantum computer (blue), as well as the PSD denoised data (orange), and PSD denoised and extended data (green).
    Inset: Corresponding spectral functions. The Fourier transform used a damping factor $\tau=100$. The vertical dashed lines indicate
    the expected analytic frequencies.}
    \label{fig:cleaned_dataset_0.4}
\end{figure}

\emph{Discussion.} 
Positive semidefinite response functions are ubiquitous in many-body physics\cite{AGD75,stefanucci_nonequilibrium_2013,uhrig2019positivity}, as is the desire to reduce systematic or stochastic noise and to obtain corresponding spectral functions. The theory and algorithms presented here are broadly applicable to problems ranging from the analysis of experimental measurements to simulations of quantum systems on classical and quantum hardware.

In this Letter, we have demonstrated the efficiency of removing noise with a PSD projection in the examples of synthetic data and real-world quantum computing data, as well as the feasibility of extending data to long time. We believe that a PSD noise filter followed by an extension should be applied much more broadly to any positive definite response function, simulated or measured, before data is analyzed and/or spectra are computed, including data from quantum Monte Carlo, tensor networks, and time-resolved experiment.

While the examples in this work focused on Green's functions, a PSD projection is not limited
to this particular use case
since the operators 
in Eq.~\ref{eq:psd}
are general.
For example, 
the identity operator is included,
and thus quantities of the form
$\langle \psi_0 | e^{-i \ham t} | \psi_0 \rangle$ are also amenable to denoising
in the same way.  This covers a range
of novel quantum algorithms that make use of such quantities,
including subspace expansions \cite{cortes2022fast,klymko2022real} and
energy minimization based on quantum phase estimation \cite{ding2023even,*ding2023simultaneous},
as well as Loschmidt echo measurements \cite{schuckert2022probing}.

We note that, similar to the extension techniques presented in this letter, memory kernel approaches \cite{Shi03,Cohen11,Cohen13,Cerrillo14,Pollock18,Pollock22} pioneered in transport physics  aim to extend data measured for short times to longer times. The approach presented is complementary to those techniques and may be combined with them. 

The connection highlighted here between positive definite response functions and the Toeplitz Vandermonde decomposition of the Gram matrix in addition allows to make a connection to algorithms developed for signal processing and control theory, such as MUSIC \cite{Schmidt1986}, which are used to analyze positive semidefinite covariance matrices. This analysis will be the topic of a future paper.

\vspace{0.1in}
\begin{acknowledgments}
We thank Vadym Adamyan for making us aware of extension theorems, Zoltan Sasv\'ari for pointing out the possibility of considering a Gram matrix extended by a single point, Efekan K\"okcu for
suggestions regarding non-equilibrium
Green's functions,
and Michael Ragone for feedback on the
mathematical exposition.
AFK was supported by the Department of Energy, Office of Basic Energy Sciences, Division of Materials Sciences and Engineering under grant no. DE-SC0023231.  
EG and CY were supported by the  U.S. Department of Energy, Office of Science, Office of Advanced Scientific Computing Research and Office of Basic Energy Science, Scientific Discovery through Advanced Computing (SciDAC) program under Award Number(s) DE-SC0022088.
The data for the figures are submitted as a supplement to this paper.
\end{acknowledgments}

\bibliography{refs}

\begin{thebibliography}{69}%
\makeatletter
\providecommand \@ifxundefined [1]{%
 \@ifx{#1\undefined}
}%
\providecommand \@ifnum [1]{%
 \ifnum #1\expandafter \@firstoftwo
 \else \expandafter \@secondoftwo
 \fi
}%
\providecommand \@ifx [1]{%
 \ifx #1\expandafter \@firstoftwo
 \else \expandafter \@secondoftwo
 \fi
}%
\providecommand \natexlab [1]{#1}%
\providecommand \enquote  [1]{``#1''}%
\providecommand \bibnamefont  [1]{#1}%
\providecommand \bibfnamefont [1]{#1}%
\providecommand \citenamefont [1]{#1}%
\providecommand \href@noop [0]{\@secondoftwo}%
\providecommand \href [0]{\begingroup \@sanitize@url \@href}%
\providecommand \@href[1]{\@@startlink{#1}\@@href}%
\providecommand \@@href[1]{\endgroup#1\@@endlink}%
\providecommand \@sanitize@url [0]{\catcode `\\12\catcode `\$12\catcode
  `\&12\catcode `\#12\catcode `\^12\catcode `\_12\catcode `\%12\relax}%
\providecommand \@@startlink[1]{}%
\providecommand \@@endlink[0]{}%
\providecommand \url  [0]{\begingroup\@sanitize@url \@url }%
\providecommand \@url [1]{\endgroup\@href {#1}{\urlprefix }}%
\providecommand \urlprefix  [0]{URL }%
\providecommand \Eprint [0]{\href }%
\providecommand \doibase [0]{https://doi.org/}%
\providecommand \selectlanguage [0]{\@gobble}%
\providecommand \bibinfo  [0]{\@secondoftwo}%
\providecommand \bibfield  [0]{\@secondoftwo}%
\providecommand \translation [1]{[#1]}%
\providecommand \BibitemOpen [0]{}%
\providecommand \bibitemStop [0]{}%
\providecommand \bibitemNoStop [0]{.\EOS\space}%
\providecommand \EOS [0]{\spacefactor3000\relax}%
\providecommand \BibitemShut  [1]{\csname bibitem#1\endcsname}%
\let\auto@bib@innerbib\@empty
\bibitem [{\citenamefont {Abrikosov}\ \emph {et~al.}(2012)\citenamefont
  {Abrikosov}, \citenamefont {Gorkov}, \citenamefont {Dzyaloshinski},\ and\
  \citenamefont {Silverman}}]{AGD75}%
  \BibitemOpen
  \bibfield  {author} {\bibinfo {author} {\bibfnamefont {A.}~\bibnamefont
  {Abrikosov}}, \bibinfo {author} {\bibfnamefont {L.}~\bibnamefont {Gorkov}},
  \bibinfo {author} {\bibfnamefont {I.}~\bibnamefont {Dzyaloshinski}},\ and\
  \bibinfo {author} {\bibfnamefont {R.}~\bibnamefont {Silverman}},\ }\href
  {https://books.google.com/books?id=JYTCAgAAQBAJ} {\emph {\bibinfo {title}
  {Methods of Quantum Field Theory in Statistical Physics}}},\ Dover Books on
  Physics\ (\bibinfo  {publisher} {Dover Publications},\ \bibinfo {year}
  {2012})\BibitemShut {NoStop}%
\bibitem [{\citenamefont {Mahan}(2010)}]{Mahan}%
  \BibitemOpen
  \bibfield  {author} {\bibinfo {author} {\bibfnamefont {G.~D.}\ \bibnamefont
  {Mahan}},\ }\href@noop {} {\emph {\bibinfo {title} {Many Particle Physics}}}\
  (\bibinfo  {publisher} {Springer},\ \bibinfo {address} {New York, NY10013,
  USA},\ \bibinfo {year} {2010})\BibitemShut {NoStop}%
\bibitem [{\citenamefont {Bruus}\ and\ \citenamefont
  {Flensberg}(2004)}]{bruus}%
  \BibitemOpen
  \bibfield  {author} {\bibinfo {author} {\bibfnamefont {H.}~\bibnamefont
  {Bruus}}\ and\ \bibinfo {author} {\bibfnamefont {K.}~\bibnamefont
  {Flensberg}},\ }\href@noop {} {\emph {\bibinfo {title} {Many-body quantum
  theory in condensed matter physics: an introduction}}}\ (\bibinfo
  {publisher} {OUP Oxford},\ \bibinfo {year} {2004})\BibitemShut {NoStop}%
\bibitem [{\citenamefont {Stefanucci}\ and\ \citenamefont {van
  Leeuwen}(2013)}]{stefanucci_nonequilibrium_2013}%
  \BibitemOpen
  \bibfield  {author} {\bibinfo {author} {\bibfnamefont {G.}~\bibnamefont
  {Stefanucci}}\ and\ \bibinfo {author} {\bibfnamefont {R.}~\bibnamefont {van
  Leeuwen}},\ }\href@noop {} {\emph {\bibinfo {title} {Nonequilibrium
  {Many}-{Body} {Theory} of {Quantum} {Systems}: {A} {Modern}
  {Introduction}}}},\ \bibinfo {edition} {1st}\ ed.\ (\bibinfo  {publisher}
  {Cambridge University Press},\ \bibinfo {year} {2013})\BibitemShut {NoStop}%
\bibitem [{\citenamefont {Metzner}\ and\ \citenamefont
  {Vollhardt}(1989)}]{Metzner1989}%
  \BibitemOpen
  \bibfield  {author} {\bibinfo {author} {\bibfnamefont {W.}~\bibnamefont
  {Metzner}}\ and\ \bibinfo {author} {\bibfnamefont {D.}~\bibnamefont
  {Vollhardt}},\ }\href {https://doi.org/10.1103/PhysRevLett.62.324} {\bibfield
   {journal} {\bibinfo  {journal} {Phys. Rev. Lett.}\ }\textbf {\bibinfo
  {volume} {62}},\ \bibinfo {pages} {324} (\bibinfo {year} {1989})}\BibitemShut
  {NoStop}%
\bibitem [{\citenamefont {Georges}\ and\ \citenamefont
  {Kotliar}(1992)}]{Georges1992}%
  \BibitemOpen
  \bibfield  {author} {\bibinfo {author} {\bibfnamefont {A.}~\bibnamefont
  {Georges}}\ and\ \bibinfo {author} {\bibfnamefont {G.}~\bibnamefont
  {Kotliar}},\ }\href {https://doi.org/10.1103/PhysRevB.45.6479} {\bibfield
  {journal} {\bibinfo  {journal} {Phys. Rev. B}\ }\textbf {\bibinfo {volume}
  {45}},\ \bibinfo {pages} {6479} (\bibinfo {year} {1992})}\BibitemShut
  {NoStop}%
\bibitem [{\citenamefont {Georges}\ \emph {et~al.}(1996)\citenamefont
  {Georges}, \citenamefont {Kotliar}, \citenamefont {Krauth},\ and\
  \citenamefont {Rozenberg}}]{georges1996dynamical}%
  \BibitemOpen
  \bibfield  {author} {\bibinfo {author} {\bibfnamefont {A.}~\bibnamefont
  {Georges}}, \bibinfo {author} {\bibfnamefont {G.}~\bibnamefont {Kotliar}},
  \bibinfo {author} {\bibfnamefont {W.}~\bibnamefont {Krauth}},\ and\ \bibinfo
  {author} {\bibfnamefont {M.~J.}\ \bibnamefont {Rozenberg}},\ }\href@noop {}
  {\bibfield  {journal} {\bibinfo  {journal} {Reviews of Modern Physics}\
  }\textbf {\bibinfo {volume} {68}},\ \bibinfo {pages} {13} (\bibinfo {year}
  {1996})}\BibitemShut {NoStop}%
\bibitem [{\citenamefont {Hettler}\ \emph {et~al.}(1998)\citenamefont
  {Hettler}, \citenamefont {Tahvildar-Zadeh}, \citenamefont {Jarrell},
  \citenamefont {Pruschke},\ and\ \citenamefont {Krishnamurthy}}]{Hettler1998}%
  \BibitemOpen
  \bibfield  {author} {\bibinfo {author} {\bibfnamefont {M.~H.}\ \bibnamefont
  {Hettler}}, \bibinfo {author} {\bibfnamefont {A.~N.}\ \bibnamefont
  {Tahvildar-Zadeh}}, \bibinfo {author} {\bibfnamefont {M.}~\bibnamefont
  {Jarrell}}, \bibinfo {author} {\bibfnamefont {T.}~\bibnamefont {Pruschke}},\
  and\ \bibinfo {author} {\bibfnamefont {H.~R.}\ \bibnamefont
  {Krishnamurthy}},\ }\href {https://doi.org/10.1103/PhysRevB.58.R7475}
  {\bibfield  {journal} {\bibinfo  {journal} {Phys. Rev. B}\ }\textbf {\bibinfo
  {volume} {58}},\ \bibinfo {pages} {R7475} (\bibinfo {year}
  {1998})}\BibitemShut {NoStop}%
\bibitem [{\citenamefont {Lichtenstein}\ and\ \citenamefont
  {Katsnelson}(2000)}]{Lichtenstein2000}%
  \BibitemOpen
  \bibfield  {author} {\bibinfo {author} {\bibfnamefont {A.~I.}\ \bibnamefont
  {Lichtenstein}}\ and\ \bibinfo {author} {\bibfnamefont {M.~I.}\ \bibnamefont
  {Katsnelson}},\ }\href {https://doi.org/10.1103/PhysRevB.62.R9283} {\bibfield
   {journal} {\bibinfo  {journal} {Phys. Rev. B}\ }\textbf {\bibinfo {volume}
  {62}},\ \bibinfo {pages} {R9283} (\bibinfo {year} {2000})}\BibitemShut
  {NoStop}%
\bibitem [{\citenamefont {Kotliar}\ \emph {et~al.}(2001)\citenamefont
  {Kotliar}, \citenamefont {Savrasov}, \citenamefont {P\'alsson},\ and\
  \citenamefont {Biroli}}]{Kotliar2001}%
  \BibitemOpen
  \bibfield  {author} {\bibinfo {author} {\bibfnamefont {G.}~\bibnamefont
  {Kotliar}}, \bibinfo {author} {\bibfnamefont {S.~Y.}\ \bibnamefont
  {Savrasov}}, \bibinfo {author} {\bibfnamefont {G.}~\bibnamefont
  {P\'alsson}},\ and\ \bibinfo {author} {\bibfnamefont {G.}~\bibnamefont
  {Biroli}},\ }\href {https://doi.org/10.1103/PhysRevLett.87.186401} {\bibfield
   {journal} {\bibinfo  {journal} {Phys. Rev. Lett.}\ }\textbf {\bibinfo
  {volume} {87}},\ \bibinfo {pages} {186401} (\bibinfo {year}
  {2001})}\BibitemShut {NoStop}%
\bibitem [{\citenamefont {Maier}\ \emph {et~al.}(2005)\citenamefont {Maier},
  \citenamefont {Jarrell}, \citenamefont {Pruschke},\ and\ \citenamefont
  {Hettler}}]{maier2005quantum}%
  \BibitemOpen
  \bibfield  {author} {\bibinfo {author} {\bibfnamefont {T.}~\bibnamefont
  {Maier}}, \bibinfo {author} {\bibfnamefont {M.}~\bibnamefont {Jarrell}},
  \bibinfo {author} {\bibfnamefont {T.}~\bibnamefont {Pruschke}},\ and\
  \bibinfo {author} {\bibfnamefont {M.~H.}\ \bibnamefont {Hettler}},\
  }\href@noop {} {\bibfield  {journal} {\bibinfo  {journal} {Reviews of Modern
  Physics}\ }\textbf {\bibinfo {volume} {77}},\ \bibinfo {pages} {1027}
  (\bibinfo {year} {2005})}\BibitemShut {NoStop}%
\bibitem [{\citenamefont {Blankenbecler}\ \emph {et~al.}(1981)\citenamefont
  {Blankenbecler}, \citenamefont {Scalapino},\ and\ \citenamefont
  {Sugar}}]{Blankenbecler1982}%
  \BibitemOpen
  \bibfield  {author} {\bibinfo {author} {\bibfnamefont {R.}~\bibnamefont
  {Blankenbecler}}, \bibinfo {author} {\bibfnamefont {D.~J.}\ \bibnamefont
  {Scalapino}},\ and\ \bibinfo {author} {\bibfnamefont {R.~L.}\ \bibnamefont
  {Sugar}},\ }\href {https://doi.org/10.1103/PhysRevD.24.2278} {\bibfield
  {journal} {\bibinfo  {journal} {Phys. Rev. D}\ }\textbf {\bibinfo {volume}
  {24}},\ \bibinfo {pages} {2278} (\bibinfo {year} {1981})}\BibitemShut
  {NoStop}%
\bibitem [{\citenamefont {Hirsch}\ and\ \citenamefont
  {Fye}(1986)}]{Hirsch1986}%
  \BibitemOpen
  \bibfield  {author} {\bibinfo {author} {\bibfnamefont {J.~E.}\ \bibnamefont
  {Hirsch}}\ and\ \bibinfo {author} {\bibfnamefont {R.~M.}\ \bibnamefont
  {Fye}},\ }\href {https://doi.org/10.1103/PhysRevLett.56.2521} {\bibfield
  {journal} {\bibinfo  {journal} {Phys. Rev. Lett.}\ }\textbf {\bibinfo
  {volume} {56}},\ \bibinfo {pages} {2521} (\bibinfo {year}
  {1986})}\BibitemShut {NoStop}%
\bibitem [{\citenamefont {Rubtsov}\ \emph {et~al.}(2005)\citenamefont
  {Rubtsov}, \citenamefont {Savkin},\ and\ \citenamefont
  {Lichtenstein}}]{Rubtsov2005}%
  \BibitemOpen
  \bibfield  {author} {\bibinfo {author} {\bibfnamefont {A.~N.}\ \bibnamefont
  {Rubtsov}}, \bibinfo {author} {\bibfnamefont {V.~V.}\ \bibnamefont
  {Savkin}},\ and\ \bibinfo {author} {\bibfnamefont {A.~I.}\ \bibnamefont
  {Lichtenstein}},\ }\href {https://doi.org/10.1103/PhysRevB.72.035122}
  {\bibfield  {journal} {\bibinfo  {journal} {Phys. Rev. B}\ }\textbf {\bibinfo
  {volume} {72}},\ \bibinfo {pages} {035122} (\bibinfo {year}
  {2005})}\BibitemShut {NoStop}%
\bibitem [{\citenamefont {Werner}\ \emph {et~al.}(2006)\citenamefont {Werner},
  \citenamefont {Comanac}, \citenamefont {de' Medici}, \citenamefont {Troyer},\
  and\ \citenamefont {Millis}}]{Werner2006}%
  \BibitemOpen
  \bibfield  {author} {\bibinfo {author} {\bibfnamefont {P.}~\bibnamefont
  {Werner}}, \bibinfo {author} {\bibfnamefont {A.}~\bibnamefont {Comanac}},
  \bibinfo {author} {\bibfnamefont {L.}~\bibnamefont {de' Medici}}, \bibinfo
  {author} {\bibfnamefont {M.}~\bibnamefont {Troyer}},\ and\ \bibinfo {author}
  {\bibfnamefont {A.~J.}\ \bibnamefont {Millis}},\ }\href
  {https://doi.org/10.1103/PhysRevLett.97.076405} {\bibfield  {journal}
  {\bibinfo  {journal} {Phys. Rev. Lett.}\ }\textbf {\bibinfo {volume} {97}},\
  \bibinfo {pages} {076405} (\bibinfo {year} {2006})}\BibitemShut {NoStop}%
\bibitem [{\citenamefont {Gull}\ \emph {et~al.}(2008)\citenamefont {Gull},
  \citenamefont {Werner}, \citenamefont {Parcollet},\ and\ \citenamefont
  {Troyer}}]{Gull2008}%
  \BibitemOpen
  \bibfield  {author} {\bibinfo {author} {\bibfnamefont {E.}~\bibnamefont
  {Gull}}, \bibinfo {author} {\bibfnamefont {P.}~\bibnamefont {Werner}},
  \bibinfo {author} {\bibfnamefont {O.}~\bibnamefont {Parcollet}},\ and\
  \bibinfo {author} {\bibfnamefont {M.}~\bibnamefont {Troyer}},\ }\href
  {https://doi.org/10.1209/0295-5075/82/57003} {\bibfield  {journal} {\bibinfo
  {journal} {Europhysics Letters}\ }\textbf {\bibinfo {volume} {82}},\ \bibinfo
  {pages} {57003} (\bibinfo {year} {2008})}\BibitemShut {NoStop}%
\bibitem [{\citenamefont {Gull}\ \emph
  {et~al.}(2011{\natexlab{a}})\citenamefont {Gull}, \citenamefont {Millis},
  \citenamefont {Lichtenstein}, \citenamefont {Rubtsov}, \citenamefont
  {Troyer},\ and\ \citenamefont {Werner}}]{Gull11RMP}%
  \BibitemOpen
  \bibfield  {author} {\bibinfo {author} {\bibfnamefont {E.}~\bibnamefont
  {Gull}}, \bibinfo {author} {\bibfnamefont {A.~J.}\ \bibnamefont {Millis}},
  \bibinfo {author} {\bibfnamefont {A.~I.}\ \bibnamefont {Lichtenstein}},
  \bibinfo {author} {\bibfnamefont {A.~N.}\ \bibnamefont {Rubtsov}}, \bibinfo
  {author} {\bibfnamefont {M.}~\bibnamefont {Troyer}},\ and\ \bibinfo {author}
  {\bibfnamefont {P.}~\bibnamefont {Werner}},\ }\href
  {https://doi.org/10.1103/RevModPhys.83.349} {\bibfield  {journal} {\bibinfo
  {journal} {Rev. Mod. Phys.}\ }\textbf {\bibinfo {volume} {83}},\ \bibinfo
  {pages} {349} (\bibinfo {year} {2011}{\natexlab{a}})}\BibitemShut {NoStop}%
\bibitem [{\citenamefont {Hedin}(1965)}]{Hedin1965}%
  \BibitemOpen
  \bibfield  {author} {\bibinfo {author} {\bibfnamefont {L.}~\bibnamefont
  {Hedin}},\ }\href {https://doi.org/10.1103/PhysRev.139.A796} {\bibfield
  {journal} {\bibinfo  {journal} {Phys. Rev.}\ }\textbf {\bibinfo {volume}
  {139}},\ \bibinfo {pages} {A796} (\bibinfo {year} {1965})}\BibitemShut
  {NoStop}%
\bibitem [{\citenamefont {Bickers}\ \emph {et~al.}(1989)\citenamefont
  {Bickers}, \citenamefont {Scalapino},\ and\ \citenamefont
  {White}}]{Bickers1989}%
  \BibitemOpen
  \bibfield  {author} {\bibinfo {author} {\bibfnamefont {N.~E.}\ \bibnamefont
  {Bickers}}, \bibinfo {author} {\bibfnamefont {D.~J.}\ \bibnamefont
  {Scalapino}},\ and\ \bibinfo {author} {\bibfnamefont {S.~R.}\ \bibnamefont
  {White}},\ }\href {https://doi.org/10.1103/PhysRevLett.62.961} {\bibfield
  {journal} {\bibinfo  {journal} {Phys. Rev. Lett.}\ }\textbf {\bibinfo
  {volume} {62}},\ \bibinfo {pages} {961} (\bibinfo {year} {1989})}\BibitemShut
  {NoStop}%
\bibitem [{\citenamefont {Yeh}\ \emph {et~al.}(2022)\citenamefont {Yeh},
  \citenamefont {Iskakov}, \citenamefont {Zgid},\ and\ \citenamefont
  {Gull}}]{Yeh2022}%
  \BibitemOpen
  \bibfield  {author} {\bibinfo {author} {\bibfnamefont {C.-N.}\ \bibnamefont
  {Yeh}}, \bibinfo {author} {\bibfnamefont {S.}~\bibnamefont {Iskakov}},
  \bibinfo {author} {\bibfnamefont {D.}~\bibnamefont {Zgid}},\ and\ \bibinfo
  {author} {\bibfnamefont {E.}~\bibnamefont {Gull}},\ }\href
  {https://doi.org/10.1103/PhysRevB.106.235104} {\bibfield  {journal} {\bibinfo
   {journal} {Phys. Rev. B}\ }\textbf {\bibinfo {volume} {106}},\ \bibinfo
  {pages} {235104} (\bibinfo {year} {2022})}\BibitemShut {NoStop}%
\bibitem [{\citenamefont {M\"uhlbacher}\ and\ \citenamefont
  {Rabani}(2008)}]{Muhlbacher08}%
  \BibitemOpen
  \bibfield  {author} {\bibinfo {author} {\bibfnamefont {L.}~\bibnamefont
  {M\"uhlbacher}}\ and\ \bibinfo {author} {\bibfnamefont {E.}~\bibnamefont
  {Rabani}},\ }\href {https://doi.org/10.1103/PhysRevLett.100.176403}
  {\bibfield  {journal} {\bibinfo  {journal} {Phys. Rev. Lett.}\ }\textbf
  {\bibinfo {volume} {100}},\ \bibinfo {pages} {176403} (\bibinfo {year}
  {2008})}\BibitemShut {NoStop}%
\bibitem [{\citenamefont {Werner}\ \emph {et~al.}(2009)\citenamefont {Werner},
  \citenamefont {Oka},\ and\ \citenamefont {Millis}}]{Werner09}%
  \BibitemOpen
  \bibfield  {author} {\bibinfo {author} {\bibfnamefont {P.}~\bibnamefont
  {Werner}}, \bibinfo {author} {\bibfnamefont {T.}~\bibnamefont {Oka}},\ and\
  \bibinfo {author} {\bibfnamefont {A.~J.}\ \bibnamefont {Millis}},\ }\href
  {https://doi.org/10.1103/PhysRevB.79.035320} {\bibfield  {journal} {\bibinfo
  {journal} {Phys. Rev. B}\ }\textbf {\bibinfo {volume} {79}},\ \bibinfo
  {pages} {035320} (\bibinfo {year} {2009})}\BibitemShut {NoStop}%
\bibitem [{\citenamefont {Gull}\ \emph
  {et~al.}(2011{\natexlab{b}})\citenamefont {Gull}, \citenamefont {Reichman},\
  and\ \citenamefont {Millis}}]{Gull11BoldNoneq}%
  \BibitemOpen
  \bibfield  {author} {\bibinfo {author} {\bibfnamefont {E.}~\bibnamefont
  {Gull}}, \bibinfo {author} {\bibfnamefont {D.~R.}\ \bibnamefont {Reichman}},\
  and\ \bibinfo {author} {\bibfnamefont {A.~J.}\ \bibnamefont {Millis}},\
  }\href {https://doi.org/10.1103/PhysRevB.84.085134} {\bibfield  {journal}
  {\bibinfo  {journal} {Phys. Rev. B}\ }\textbf {\bibinfo {volume} {84}},\
  \bibinfo {pages} {085134} (\bibinfo {year} {2011}{\natexlab{b}})}\BibitemShut
  {NoStop}%
\bibitem [{\citenamefont {Aoki}\ \emph {et~al.}(2014)\citenamefont {Aoki},
  \citenamefont {Tsuji}, \citenamefont {Eckstein}, \citenamefont {Kollar},
  \citenamefont {Oka},\ and\ \citenamefont {Werner}}]{Aoki2014}%
  \BibitemOpen
  \bibfield  {author} {\bibinfo {author} {\bibfnamefont {H.}~\bibnamefont
  {Aoki}}, \bibinfo {author} {\bibfnamefont {N.}~\bibnamefont {Tsuji}},
  \bibinfo {author} {\bibfnamefont {M.}~\bibnamefont {Eckstein}}, \bibinfo
  {author} {\bibfnamefont {M.}~\bibnamefont {Kollar}}, \bibinfo {author}
  {\bibfnamefont {T.}~\bibnamefont {Oka}},\ and\ \bibinfo {author}
  {\bibfnamefont {P.}~\bibnamefont {Werner}},\ }\href
  {https://doi.org/10.1103/RevModPhys.86.779} {\bibfield  {journal} {\bibinfo
  {journal} {Rev. Mod. Phys.}\ }\textbf {\bibinfo {volume} {86}},\ \bibinfo
  {pages} {779} (\bibinfo {year} {2014})}\BibitemShut {NoStop}%
\bibitem [{\citenamefont {Schüler}\ \emph {et~al.}(2020)\citenamefont
  {Schüler}, \citenamefont {Golež}, \citenamefont {Murakami}, \citenamefont
  {Bittner}, \citenamefont {Herrmann}, \citenamefont {Strand}, \citenamefont
  {Werner},\ and\ \citenamefont {Eckstein}}]{Schueler20}%
  \BibitemOpen
  \bibfield  {author} {\bibinfo {author} {\bibfnamefont {M.}~\bibnamefont
  {Schüler}}, \bibinfo {author} {\bibfnamefont {D.}~\bibnamefont {Golež}},
  \bibinfo {author} {\bibfnamefont {Y.}~\bibnamefont {Murakami}}, \bibinfo
  {author} {\bibfnamefont {N.}~\bibnamefont {Bittner}}, \bibinfo {author}
  {\bibfnamefont {A.}~\bibnamefont {Herrmann}}, \bibinfo {author}
  {\bibfnamefont {H.~U.}\ \bibnamefont {Strand}}, \bibinfo {author}
  {\bibfnamefont {P.}~\bibnamefont {Werner}},\ and\ \bibinfo {author}
  {\bibfnamefont {M.}~\bibnamefont {Eckstein}},\ }\href
  {https://doi.org/https://doi.org/10.1016/j.cpc.2020.107484} {\bibfield
  {journal} {\bibinfo  {journal} {Computer Physics Communications}\ }\textbf
  {\bibinfo {volume} {257}},\ \bibinfo {pages} {107484} (\bibinfo {year}
  {2020})}\BibitemShut {NoStop}%
\bibitem [{\citenamefont {Dong}\ \emph {et~al.}(2022)\citenamefont {Dong},
  \citenamefont {Gull},\ and\ \citenamefont {Strand}}]{dong2022excitations}%
  \BibitemOpen
  \bibfield  {author} {\bibinfo {author} {\bibfnamefont {X.}~\bibnamefont
  {Dong}}, \bibinfo {author} {\bibfnamefont {E.}~\bibnamefont {Gull}},\ and\
  \bibinfo {author} {\bibfnamefont {H.~U.~R.}\ \bibnamefont {Strand}},\ }\href
  {https://doi.org/10.1103/PhysRevB.106.125153} {\bibfield  {journal} {\bibinfo
   {journal} {Phys. Rev. B}\ }\textbf {\bibinfo {volume} {106}},\ \bibinfo
  {pages} {125153} (\bibinfo {year} {2022})}\BibitemShut {NoStop}%
\bibitem [{\citenamefont {N\'u\~nez Fern\'andez}\ \emph
  {et~al.}(2022)\citenamefont {N\'u\~nez Fern\'andez}, \citenamefont {Jeannin},
  \citenamefont {Dumitrescu}, \citenamefont {Kloss}, \citenamefont {Kaye},
  \citenamefont {Parcollet},\ and\ \citenamefont {Waintal}}]{NunezFernandez22}%
  \BibitemOpen
  \bibfield  {author} {\bibinfo {author} {\bibfnamefont {Y.}~\bibnamefont
  {N\'u\~nez Fern\'andez}}, \bibinfo {author} {\bibfnamefont {M.}~\bibnamefont
  {Jeannin}}, \bibinfo {author} {\bibfnamefont {P.~T.}\ \bibnamefont
  {Dumitrescu}}, \bibinfo {author} {\bibfnamefont {T.}~\bibnamefont {Kloss}},
  \bibinfo {author} {\bibfnamefont {J.}~\bibnamefont {Kaye}}, \bibinfo {author}
  {\bibfnamefont {O.}~\bibnamefont {Parcollet}},\ and\ \bibinfo {author}
  {\bibfnamefont {X.}~\bibnamefont {Waintal}},\ }\href
  {https://doi.org/10.1103/PhysRevX.12.041018} {\bibfield  {journal} {\bibinfo
  {journal} {Phys. Rev. X}\ }\textbf {\bibinfo {volume} {12}},\ \bibinfo
  {pages} {041018} (\bibinfo {year} {2022})}\BibitemShut {NoStop}%
\bibitem [{\citenamefont {Erpenbeck}\ \emph {et~al.}(2023)\citenamefont
  {Erpenbeck}, \citenamefont {Gull},\ and\ \citenamefont
  {Cohen}}]{Erpenbeck23}%
  \BibitemOpen
  \bibfield  {author} {\bibinfo {author} {\bibfnamefont {A.}~\bibnamefont
  {Erpenbeck}}, \bibinfo {author} {\bibfnamefont {E.}~\bibnamefont {Gull}},\
  and\ \bibinfo {author} {\bibfnamefont {G.}~\bibnamefont {Cohen}},\ }\href
  {https://doi.org/10.1103/PhysRevLett.130.186301} {\bibfield  {journal}
  {\bibinfo  {journal} {Phys. Rev. Lett.}\ }\textbf {\bibinfo {volume} {130}},\
  \bibinfo {pages} {186301} (\bibinfo {year} {2023})}\BibitemShut {NoStop}%
\bibitem [{\citenamefont {Adamyan}\ \emph {et~al.}(2009)\citenamefont
  {Adamyan}, \citenamefont {Berezansky}, \citenamefont {Gohberg}, \citenamefont
  {Gorbachuk}, \citenamefont {Gorbachuk}, \citenamefont {Kochubei},
  \citenamefont {Langer},\ and\ \citenamefont {Popov}}]{Adamyan2009}%
  \BibitemOpen
  \bibfield  {author} {\bibinfo {author} {\bibfnamefont {V.}~\bibnamefont
  {Adamyan}}, \bibinfo {author} {\bibfnamefont {Y.}~\bibnamefont {Berezansky}},
  \bibinfo {author} {\bibfnamefont {I.}~\bibnamefont {Gohberg}}, \bibinfo
  {author} {\bibfnamefont {M.}~\bibnamefont {Gorbachuk}}, \bibinfo {author}
  {\bibfnamefont {V.}~\bibnamefont {Gorbachuk}}, \bibinfo {author}
  {\bibfnamefont {A.}~\bibnamefont {Kochubei}}, \bibinfo {author}
  {\bibfnamefont {H.}~\bibnamefont {Langer}},\ and\ \bibinfo {author}
  {\bibfnamefont {G.}~\bibnamefont {Popov}},\ }\href
  {https://books.google.com/books?id=mOvhjwEACAAJ} {\emph {\bibinfo {title}
  {Modern Analysis and Applications: The Mark Krein Centenary Conference -
  Volume 1: Operator Theory and Related Topics}}},\ Operator Theory: Advances
  and Applications\ (\bibinfo  {publisher} {Birkh{\"a}user Basel},\ \bibinfo
  {year} {2009})\BibitemShut {NoStop}%
\bibitem [{\citenamefont {Vorberger}\ \emph {et~al.}(2012)\citenamefont
  {Vorberger}, \citenamefont {Donko}, \citenamefont {Tkachenko},\ and\
  \citenamefont {Gericke}}]{Vorberger12}%
  \BibitemOpen
  \bibfield  {author} {\bibinfo {author} {\bibfnamefont {J.}~\bibnamefont
  {Vorberger}}, \bibinfo {author} {\bibfnamefont {Z.}~\bibnamefont {Donko}},
  \bibinfo {author} {\bibfnamefont {I.~M.}\ \bibnamefont {Tkachenko}},\ and\
  \bibinfo {author} {\bibfnamefont {D.~O.}\ \bibnamefont {Gericke}},\ }\href
  {https://doi.org/10.1103/PhysRevLett.109.225001} {\bibfield  {journal}
  {\bibinfo  {journal} {Phys. Rev. Lett.}\ }\textbf {\bibinfo {volume} {109}},\
  \bibinfo {pages} {225001} (\bibinfo {year} {2012})}\BibitemShut {NoStop}%
\bibitem [{\citenamefont {Filinov}\ \emph {et~al.}(2023)\citenamefont
  {Filinov}, \citenamefont {Ara},\ and\ \citenamefont {Tkachenko}}]{Filinov23}%
  \BibitemOpen
  \bibfield  {author} {\bibinfo {author} {\bibfnamefont {A.~V.}\ \bibnamefont
  {Filinov}}, \bibinfo {author} {\bibfnamefont {J.}~\bibnamefont {Ara}},\ and\
  \bibinfo {author} {\bibfnamefont {I.~M.}\ \bibnamefont {Tkachenko}},\ }\href
  {https://doi.org/10.1103/PhysRevB.107.195143} {\bibfield  {journal} {\bibinfo
   {journal} {Phys. Rev. B}\ }\textbf {\bibinfo {volume} {107}},\ \bibinfo
  {pages} {195143} (\bibinfo {year} {2023})}\BibitemShut {NoStop}%
\bibitem [{\citenamefont {Jarrell}\ and\ \citenamefont
  {Gubernatis}(1996)}]{Jarrell1996}%
  \BibitemOpen
  \bibfield  {author} {\bibinfo {author} {\bibfnamefont {M.}~\bibnamefont
  {Jarrell}}\ and\ \bibinfo {author} {\bibfnamefont {J.}~\bibnamefont
  {Gubernatis}},\ }\href
  {https://doi.org/https://doi.org/10.1016/0370-1573(95)00074-7} {\bibfield
  {journal} {\bibinfo  {journal} {Physics Reports}\ }\textbf {\bibinfo {volume}
  {269}},\ \bibinfo {pages} {133} (\bibinfo {year} {1996})}\BibitemShut
  {NoStop}%
\bibitem [{\citenamefont {Fei}\ \emph {et~al.}(2021{\natexlab{a}})\citenamefont
  {Fei}, \citenamefont {Yeh},\ and\ \citenamefont {Gull}}]{fei2021nevanlinna}%
  \BibitemOpen
  \bibfield  {author} {\bibinfo {author} {\bibfnamefont {J.}~\bibnamefont
  {Fei}}, \bibinfo {author} {\bibfnamefont {C.-N.}\ \bibnamefont {Yeh}},\ and\
  \bibinfo {author} {\bibfnamefont {E.}~\bibnamefont {Gull}},\ }\href@noop {}
  {\bibfield  {journal} {\bibinfo  {journal} {Physical Review Letters}\
  }\textbf {\bibinfo {volume} {126}},\ \bibinfo {pages} {056402} (\bibinfo
  {year} {2021}{\natexlab{a}})}\BibitemShut {NoStop}%
\bibitem [{\citenamefont {Fei}\ \emph {et~al.}(2021{\natexlab{b}})\citenamefont
  {Fei}, \citenamefont {Yeh}, \citenamefont {Zgid},\ and\ \citenamefont
  {Gull}}]{Fei2021Caratheodory}%
  \BibitemOpen
  \bibfield  {author} {\bibinfo {author} {\bibfnamefont {J.}~\bibnamefont
  {Fei}}, \bibinfo {author} {\bibfnamefont {C.-N.}\ \bibnamefont {Yeh}},
  \bibinfo {author} {\bibfnamefont {D.}~\bibnamefont {Zgid}},\ and\ \bibinfo
  {author} {\bibfnamefont {E.}~\bibnamefont {Gull}},\ }\href
  {https://doi.org/10.1103/PhysRevB.104.165111} {\bibfield  {journal} {\bibinfo
   {journal} {Phys. Rev. B}\ }\textbf {\bibinfo {volume} {104}},\ \bibinfo
  {pages} {165111} (\bibinfo {year} {2021}{\natexlab{b}})}\BibitemShut
  {NoStop}%
\bibitem [{\citenamefont {Bochner}(1932)}]{Bochner1932}%
  \BibitemOpen
  \bibfield  {author} {\bibinfo {author} {\bibfnamefont {S.}~\bibnamefont
  {Bochner}},\ }\href@noop {} {\emph {\bibinfo {title} {Vorlesungen {\"{u}}ber
  Fouriersche Integrale}}}\ (\bibinfo  {publisher} {Leipzig : Akademische
  Verlagsgesellschaft},\ \bibinfo {year} {1932})\BibitemShut {NoStop}%
\bibitem [{\citenamefont {Hyrkäs}\ \emph {et~al.}(2022)\citenamefont
  {Hyrkäs}, \citenamefont {Karlsson},\ and\ \citenamefont {van
  Leeuwen}}]{Hyrkas22}%
  \BibitemOpen
  \bibfield  {author} {\bibinfo {author} {\bibfnamefont {M.~J.}\ \bibnamefont
  {Hyrkäs}}, \bibinfo {author} {\bibfnamefont {D.}~\bibnamefont {Karlsson}},\
  and\ \bibinfo {author} {\bibfnamefont {R.}~\bibnamefont {van Leeuwen}},\
  }\href {https://doi.org/10.1088/1751-8121/ac802d} {\bibfield  {journal}
  {\bibinfo  {journal} {Journal of Physics A: Mathematical and Theoretical}\
  }\textbf {\bibinfo {volume} {55}},\ \bibinfo {pages} {335301} (\bibinfo
  {year} {2022})}\BibitemShut {NoStop}%
\bibitem [{\citenamefont {Sasvári}(1994)}]{Sasvari94}%
  \BibitemOpen
  \bibfield  {author} {\bibinfo {author} {\bibfnamefont {Z.}~\bibnamefont
  {Sasvári}},\ }\href@noop {} {\emph {\bibinfo {title} {Positive Definite and
  Definitizable Functions}}}\ (\bibinfo  {publisher} {Berlin: Akademie
  Verlag},\ \bibinfo {year} {1994.})\BibitemShut {NoStop}%
\bibitem [{Note1()}]{Note1}%
  \BibitemOpen
  \bibinfo {note} {Note that in the standard terminology that positive definite
  functions include zero eigenvalues \cite {Sasvari94}}\BibitemShut {NoStop}%
\bibitem [{\citenamefont {Balzer}\ and\ \citenamefont
  {Eckstein}(2014)}]{Balzer14}%
  \BibitemOpen
  \bibfield  {author} {\bibinfo {author} {\bibfnamefont {K.}~\bibnamefont
  {Balzer}}\ and\ \bibinfo {author} {\bibfnamefont {M.}~\bibnamefont
  {Eckstein}},\ }\href {https://doi.org/10.1103/PhysRevB.89.035148} {\bibfield
  {journal} {\bibinfo  {journal} {Phys. Rev. B}\ }\textbf {\bibinfo {volume}
  {89}},\ \bibinfo {pages} {035148} (\bibinfo {year} {2014})}\BibitemShut
  {NoStop}%
\bibitem [{\citenamefont {Gramsch}\ and\ \citenamefont
  {Potthoff}(2015)}]{Gramsch15}%
  \BibitemOpen
  \bibfield  {author} {\bibinfo {author} {\bibfnamefont {C.}~\bibnamefont
  {Gramsch}}\ and\ \bibinfo {author} {\bibfnamefont {M.}~\bibnamefont
  {Potthoff}},\ }\href {https://doi.org/10.1103/PhysRevB.92.235135} {\bibfield
  {journal} {\bibinfo  {journal} {Phys. Rev. B}\ }\textbf {\bibinfo {volume}
  {92}},\ \bibinfo {pages} {235135} (\bibinfo {year} {2015})}\BibitemShut
  {NoStop}%
\bibitem [{\citenamefont {Higham}(1988)}]{Higham88}%
  \BibitemOpen
  \bibfield  {author} {\bibinfo {author} {\bibfnamefont {N.~J.}\ \bibnamefont
  {Higham}},\ }\href
  {https://doi.org/https://doi.org/10.1016/0024-3795(88)90223-6} {\bibfield
  {journal} {\bibinfo  {journal} {Linear Algebra and its Applications}\
  }\textbf {\bibinfo {volume} {103}},\ \bibinfo {pages} {103} (\bibinfo {year}
  {1988})}\BibitemShut {NoStop}%
\bibitem [{\citenamefont {Eberle}\ and\ \citenamefont
  {Maciel}(2003)}]{Eberle03}%
  \BibitemOpen
  \bibfield  {author} {\bibinfo {author} {\bibfnamefont {M.~G.}\ \bibnamefont
  {Eberle}}\ and\ \bibinfo {author} {\bibfnamefont {M.~C.}\ \bibnamefont
  {Maciel}},\ }\href {https://www.scielo.br/j/cam/a/QxPDMqbbGBfSN5nnp8NV6tf/}
  {\bibfield  {journal} {\bibinfo  {journal} {Computational \& Applied
  Mathematics}\ }\textbf {\bibinfo {volume} {22}},\ \bibinfo {pages} {1–18}
  (\bibinfo {year} {2003})}\BibitemShut {NoStop}%
\bibitem [{\citenamefont {Bauschke}\ and\ \citenamefont
  {Borwein}(1996)}]{Bauschke96}%
  \BibitemOpen
  \bibfield  {author} {\bibinfo {author} {\bibfnamefont {H.~H.}\ \bibnamefont
  {Bauschke}}\ and\ \bibinfo {author} {\bibfnamefont {J.~M.}\ \bibnamefont
  {Borwein}},\ }\href {https://doi.org/10.1137/S0036144593251710} {\bibfield
  {journal} {\bibinfo  {journal} {SIAM Review}\ }\textbf {\bibinfo {volume}
  {38}},\ \bibinfo {pages} {367} (\bibinfo {year} {1996})},\ \Eprint
  {https://arxiv.org/abs/https://doi.org/10.1137/S0036144593251710}
  {https://doi.org/10.1137/S0036144593251710} \BibitemShut {NoStop}%
\bibitem [{\citenamefont {Bauschke}\ \emph {et~al.}(2002)\citenamefont
  {Bauschke}, \citenamefont {Combettes},\ and\ \citenamefont
  {Luke}}]{Bauschke2002PhaseRE}%
  \BibitemOpen
  \bibfield  {author} {\bibinfo {author} {\bibfnamefont {H.~H.}\ \bibnamefont
  {Bauschke}}, \bibinfo {author} {\bibfnamefont {P.~L.}\ \bibnamefont
  {Combettes}},\ and\ \bibinfo {author} {\bibfnamefont {D.~R.}\ \bibnamefont
  {Luke}},\ }\href {https://api.semanticscholar.org/CorpusID:15282378}
  {\bibfield  {journal} {\bibinfo  {journal} {Journal of the Optical Society of
  America. A, Optics, image science, and vision}\ }\textbf {\bibinfo {volume}
  {19 7}},\ \bibinfo {pages} {1334} (\bibinfo {year} {2002})}\BibitemShut
  {NoStop}%
\bibitem [{\citenamefont {Lewis}\ \emph {et~al.}(2009)\citenamefont {Lewis},
  \citenamefont {Luke},\ and\ \citenamefont {Malick}}]{Lewis09}%
  \BibitemOpen
  \bibfield  {author} {\bibinfo {author} {\bibfnamefont {A.~S.}\ \bibnamefont
  {Lewis}}, \bibinfo {author} {\bibfnamefont {D.~R.}\ \bibnamefont {Luke}},\
  and\ \bibinfo {author} {\bibfnamefont {J.}~\bibnamefont {Malick}},\ }\href
  {https://doi.org/10.1007/s10208-008-9036-y} {\bibfield  {journal} {\bibinfo
  {journal} {Foundations of Computational Mathematics}\ }\textbf {\bibinfo
  {volume} {9}},\ \bibinfo {pages} {485} (\bibinfo {year} {2009})}\BibitemShut
  {NoStop}%
\bibitem [{\citenamefont {Carath{\'e}odory}(1907)}]{CaratheodoryExtension1911}%
  \BibitemOpen
  \bibfield  {author} {\bibinfo {author} {\bibfnamefont {C.}~\bibnamefont
  {Carath{\'e}odory}},\ }\href {https://doi.org/10.1007/BF01449883} {\bibfield
  {journal} {\bibinfo  {journal} {Mathematische Annalen}\ }\textbf {\bibinfo
  {volume} {64}},\ \bibinfo {pages} {95} (\bibinfo {year} {1907})}\BibitemShut
  {NoStop}%
\bibitem [{\citenamefont {Krein}(1940)}]{krein1940problem}%
  \BibitemOpen
  \bibfield  {author} {\bibinfo {author} {\bibfnamefont {M.}~\bibnamefont
  {Krein}},\ }in\ \href@noop {} {\emph {\bibinfo {booktitle} {Comptes Rendus de
  l'Académie des Sciences de l'URSS}}},\ Vol.~\bibinfo {volume} {46}\
  (\bibinfo {year} {1940})\ pp.\ \bibinfo {pages} {17--22}\BibitemShut
  {NoStop}%
\bibitem [{\citenamefont {Sasv{\'a}ri}(2006)}]{Sasvari2006}%
  \BibitemOpen
  \bibfield  {author} {\bibinfo {author} {\bibfnamefont {Z.}~\bibnamefont
  {Sasv{\'a}ri}},\ }\bibinfo {title} {The extension problem for positive
  definite functions. a short historical survey},\ in\ \href
  {https://doi.org/10.1007/3-7643-7516-7_16} {\emph {\bibinfo {booktitle}
  {Operator Theory and Indefinite Inner Product Spaces: Presented on the
  occasion of the retirement of Heinz Langer in the Colloquium on Operator
  Theory, Vienna, March 2004}}},\ \bibinfo {editor} {edited by\ \bibinfo
  {editor} {\bibfnamefont {M.}~\bibnamefont {Langer}}, \bibinfo {editor}
  {\bibfnamefont {A.}~\bibnamefont {Luger}},\ and\ \bibinfo {editor}
  {\bibfnamefont {H.}~\bibnamefont {Woracek}}}\ (\bibinfo  {publisher}
  {Birkh{\"a}user Basel},\ \bibinfo {address} {Basel},\ \bibinfo {year}
  {2006})\ pp.\ \bibinfo {pages} {365--379}\BibitemShut {NoStop}%
\bibitem [{Note2()}]{Note2}%
  \BibitemOpen
  \bibinfo {note} {We thank Z. Sasv\'ari for suggesting this
  approach}\BibitemShut {NoStop}%
\bibitem [{\citenamefont {Caratheodory}\ and\ \citenamefont
  {Fej{\'e}r}(1911)}]{CaratheodoryFejer1911}%
  \BibitemOpen
  \bibfield  {author} {\bibinfo {author} {\bibfnamefont {C.}~\bibnamefont
  {Caratheodory}}\ and\ \bibinfo {author} {\bibfnamefont {L.}~\bibnamefont
  {Fej{\'e}r}},\ }\href {https://doi.org/10.1007/BF03014796} {\bibfield
  {journal} {\bibinfo  {journal} {Rendiconti del Circolo Matematico di Palermo
  (1884-1940)}\ }\textbf {\bibinfo {volume} {32}},\ \bibinfo {pages} {218}
  (\bibinfo {year} {1911})}\BibitemShut {NoStop}%
\bibitem [{\citenamefont {Schmidt}(1986)}]{Schmidt1986}%
  \BibitemOpen
  \bibfield  {author} {\bibinfo {author} {\bibfnamefont {R.}~\bibnamefont
  {Schmidt}},\ }\href {https://doi.org/10.1109/TAP.1986.1143830} {\bibfield
  {journal} {\bibinfo  {journal} {IEEE Transactions on Antennas and
  Propagation}\ }\textbf {\bibinfo {volume} {34}},\ \bibinfo {pages} {276}
  (\bibinfo {year} {1986})}\BibitemShut {NoStop}%
\bibitem [{\citenamefont {Paulsen}\ and\ \citenamefont
  {Raghupathi}(2016)}]{paulsen_raghupathi_2016}%
  \BibitemOpen
  \bibfield  {author} {\bibinfo {author} {\bibfnamefont {V.~I.}\ \bibnamefont
  {Paulsen}}\ and\ \bibinfo {author} {\bibfnamefont {M.}~\bibnamefont
  {Raghupathi}},\ }\href {https://doi.org/10.1017/CBO9781316219232} {\emph
  {\bibinfo {title} {An Introduction to the Theory of Reproducing Kernel
  Hilbert Spaces}}},\ Cambridge Studies in Advanced Mathematics\ (\bibinfo
  {publisher} {Cambridge University Press},\ \bibinfo {year}
  {2016})\BibitemShut {NoStop}%
\bibitem [{\citenamefont {K{\"o}kc{\"u}}\ \emph {et~al.}(2023)\citenamefont
  {K{\"o}kc{\"u}}, \citenamefont {Labib}, \citenamefont {Freericks},\ and\
  \citenamefont {Kemper}}]{kokcu2023linear}%
  \BibitemOpen
  \bibfield  {author} {\bibinfo {author} {\bibfnamefont {E.}~\bibnamefont
  {K{\"o}kc{\"u}}}, \bibinfo {author} {\bibfnamefont {H.~A.}\ \bibnamefont
  {Labib}}, \bibinfo {author} {\bibfnamefont {J.}~\bibnamefont {Freericks}},\
  and\ \bibinfo {author} {\bibfnamefont {A.~F.}\ \bibnamefont {Kemper}},\
  }\href@noop {} {\bibfield  {journal} {\bibinfo  {journal} {arXiv preprint
  arXiv:2302.10219}\ } (\bibinfo {year} {2023})}\BibitemShut {NoStop}%
\bibitem [{\citenamefont {Cruz}\ and\ \citenamefont
  {Magano}(2022)}]{cruz2022super}%
  \BibitemOpen
  \bibfield  {author} {\bibinfo {author} {\bibfnamefont {D.}~\bibnamefont
  {Cruz}}\ and\ \bibinfo {author} {\bibfnamefont {D.}~\bibnamefont {Magano}},\
  }\href@noop {} {\bibfield  {journal} {\bibinfo  {journal} {arXiv preprint
  arXiv:2210.04919}\ } (\bibinfo {year} {2022})}\BibitemShut {NoStop}%
\bibitem [{\citenamefont {Steckmann}\ \emph {et~al.}(2023)\citenamefont
  {Steckmann}, \citenamefont {Keen}, \citenamefont {K{\"o}kc{\"u}},
  \citenamefont {Kemper}, \citenamefont {Dumitrescu},\ and\ \citenamefont
  {Wang}}]{steckmann2023mapping}%
  \BibitemOpen
  \bibfield  {author} {\bibinfo {author} {\bibfnamefont {T.}~\bibnamefont
  {Steckmann}}, \bibinfo {author} {\bibfnamefont {T.}~\bibnamefont {Keen}},
  \bibinfo {author} {\bibfnamefont {E.}~\bibnamefont {K{\"o}kc{\"u}}}, \bibinfo
  {author} {\bibfnamefont {A.~F.}\ \bibnamefont {Kemper}}, \bibinfo {author}
  {\bibfnamefont {E.~F.}\ \bibnamefont {Dumitrescu}},\ and\ \bibinfo {author}
  {\bibfnamefont {Y.}~\bibnamefont {Wang}},\ }\href@noop {} {\bibfield
  {journal} {\bibinfo  {journal} {Physical Review Research}\ }\textbf {\bibinfo
  {volume} {5}},\ \bibinfo {pages} {023198} (\bibinfo {year}
  {2023})}\BibitemShut {NoStop}%
\bibitem [{\citenamefont {Shen}\ \emph {et~al.}(2023)\citenamefont {Shen},
  \citenamefont {Camps}, \citenamefont {Darbha}, \citenamefont {Szasz},
  \citenamefont {Klymko}, \citenamefont {Tubman}, \citenamefont {Van~Beeumen}
  \emph {et~al.}}]{shen2023estimating}%
  \BibitemOpen
  \bibfield  {author} {\bibinfo {author} {\bibfnamefont {Y.}~\bibnamefont
  {Shen}}, \bibinfo {author} {\bibfnamefont {D.}~\bibnamefont {Camps}},
  \bibinfo {author} {\bibfnamefont {S.}~\bibnamefont {Darbha}}, \bibinfo
  {author} {\bibfnamefont {A.}~\bibnamefont {Szasz}}, \bibinfo {author}
  {\bibfnamefont {K.}~\bibnamefont {Klymko}}, \bibinfo {author} {\bibfnamefont
  {N.~M.}\ \bibnamefont {Tubman}}, \bibinfo {author} {\bibfnamefont
  {R.}~\bibnamefont {Van~Beeumen}}, \emph {et~al.},\ }\href@noop {} {\bibfield
  {journal} {\bibinfo  {journal} {arXiv preprint arXiv:2306.01858}\ } (\bibinfo
  {year} {2023})}\BibitemShut {NoStop}%
\bibitem [{\citenamefont {Uhrig}\ \emph {et~al.}(2019)\citenamefont {Uhrig},
  \citenamefont {Kalthoff},\ and\ \citenamefont
  {Freericks}}]{uhrig2019positivity}%
  \BibitemOpen
  \bibfield  {author} {\bibinfo {author} {\bibfnamefont {G.~S.}\ \bibnamefont
  {Uhrig}}, \bibinfo {author} {\bibfnamefont {M.~H.}\ \bibnamefont
  {Kalthoff}},\ and\ \bibinfo {author} {\bibfnamefont {J.~K.}\ \bibnamefont
  {Freericks}},\ }\href@noop {} {\bibfield  {journal} {\bibinfo  {journal}
  {Physical Review Letters}\ }\textbf {\bibinfo {volume} {122}},\ \bibinfo
  {pages} {130604} (\bibinfo {year} {2019})}\BibitemShut {NoStop}%
\bibitem [{\citenamefont {Cortes}\ \emph {et~al.}(2022)\citenamefont {Cortes},
  \citenamefont {DePrince~III},\ and\ \citenamefont {Gray}}]{cortes2022fast}%
  \BibitemOpen
  \bibfield  {author} {\bibinfo {author} {\bibfnamefont {C.~L.}\ \bibnamefont
  {Cortes}}, \bibinfo {author} {\bibfnamefont {A.~E.}\ \bibnamefont
  {DePrince~III}},\ and\ \bibinfo {author} {\bibfnamefont {S.~K.}\ \bibnamefont
  {Gray}},\ }\href@noop {} {\bibfield  {journal} {\bibinfo  {journal} {Physical
  Review A}\ }\textbf {\bibinfo {volume} {106}},\ \bibinfo {pages} {042409}
  (\bibinfo {year} {2022})}\BibitemShut {NoStop}%
\bibitem [{\citenamefont {Klymko}\ \emph {et~al.}(2022)\citenamefont {Klymko},
  \citenamefont {Mejuto-Zaera}, \citenamefont {Cotton}, \citenamefont
  {Wudarski}, \citenamefont {Urbanek}, \citenamefont {Hait}, \citenamefont
  {Head-Gordon}, \citenamefont {Whaley}, \citenamefont {Moussa}, \citenamefont
  {Wiebe} \emph {et~al.}}]{klymko2022real}%
  \BibitemOpen
  \bibfield  {author} {\bibinfo {author} {\bibfnamefont {K.}~\bibnamefont
  {Klymko}}, \bibinfo {author} {\bibfnamefont {C.}~\bibnamefont
  {Mejuto-Zaera}}, \bibinfo {author} {\bibfnamefont {S.~J.}\ \bibnamefont
  {Cotton}}, \bibinfo {author} {\bibfnamefont {F.}~\bibnamefont {Wudarski}},
  \bibinfo {author} {\bibfnamefont {M.}~\bibnamefont {Urbanek}}, \bibinfo
  {author} {\bibfnamefont {D.}~\bibnamefont {Hait}}, \bibinfo {author}
  {\bibfnamefont {M.}~\bibnamefont {Head-Gordon}}, \bibinfo {author}
  {\bibfnamefont {K.~B.}\ \bibnamefont {Whaley}}, \bibinfo {author}
  {\bibfnamefont {J.}~\bibnamefont {Moussa}}, \bibinfo {author} {\bibfnamefont
  {N.}~\bibnamefont {Wiebe}}, \emph {et~al.},\ }\href@noop {} {\bibfield
  {journal} {\bibinfo  {journal} {PRX Quantum}\ }\textbf {\bibinfo {volume}
  {3}},\ \bibinfo {pages} {020323} (\bibinfo {year} {2022})}\BibitemShut
  {NoStop}%
\bibitem [{\citenamefont {Ding}\ and\ \citenamefont
  {Lin}(2023{\natexlab{a}})}]{ding2023even}%
  \BibitemOpen
  \bibfield  {author} {\bibinfo {author} {\bibfnamefont {Z.}~\bibnamefont
  {Ding}}\ and\ \bibinfo {author} {\bibfnamefont {L.}~\bibnamefont {Lin}},\
  }\href@noop {} {\bibfield  {journal} {\bibinfo  {journal} {PRX Quantum}\
  }\textbf {\bibinfo {volume} {4}},\ \bibinfo {pages} {020331} (\bibinfo {year}
  {2023}{\natexlab{a}})}\BibitemShut {NoStop}%
\bibitem [{\citenamefont {Ding}\ and\ \citenamefont
  {Lin}(2023{\natexlab{b}})}]{ding2023simultaneous}%
  \BibitemOpen
  \bibfield  {author} {\bibinfo {author} {\bibfnamefont {Z.}~\bibnamefont
  {Ding}}\ and\ \bibinfo {author} {\bibfnamefont {L.}~\bibnamefont {Lin}},\
  }\href@noop {} {\bibfield  {journal} {\bibinfo  {journal} {arXiv preprint
  arXiv:2303.05714}\ } (\bibinfo {year} {2023}{\natexlab{b}})}\BibitemShut
  {NoStop}%
\bibitem [{\citenamefont {Schuckert}\ \emph {et~al.}(2022)\citenamefont
  {Schuckert}, \citenamefont {Bohrdt}, \citenamefont {Crane},\ and\
  \citenamefont {Knap}}]{schuckert2022probing}%
  \BibitemOpen
  \bibfield  {author} {\bibinfo {author} {\bibfnamefont {A.}~\bibnamefont
  {Schuckert}}, \bibinfo {author} {\bibfnamefont {A.}~\bibnamefont {Bohrdt}},
  \bibinfo {author} {\bibfnamefont {E.}~\bibnamefont {Crane}},\ and\ \bibinfo
  {author} {\bibfnamefont {M.}~\bibnamefont {Knap}},\ }\href@noop {} {\bibfield
   {journal} {\bibinfo  {journal} {arXiv preprint arXiv:2206.01756}\ }
  (\bibinfo {year} {2022})}\BibitemShut {NoStop}%
\bibitem [{\citenamefont {Shi}\ and\ \citenamefont {Geva}(2003)}]{Shi03}%
  \BibitemOpen
  \bibfield  {author} {\bibinfo {author} {\bibfnamefont {Q.}~\bibnamefont
  {Shi}}\ and\ \bibinfo {author} {\bibfnamefont {E.}~\bibnamefont {Geva}},\
  }\href {https://doi.org/10.1063/1.1624830} {\bibfield  {journal} {\bibinfo
  {journal} {The Journal of Chemical Physics}\ }\textbf {\bibinfo {volume}
  {119}},\ \bibinfo {pages} {12063} (\bibinfo {year} {2003})},\ \Eprint
  {https://arxiv.org/abs/https://pubs.aip.org/aip/jcp/article-pdf/119/23/12063/10853520/12063\_1\_online.pdf}
  {https://pubs.aip.org/aip/jcp/article-pdf/119/23/12063/10853520/12063\_1\_online.pdf}
  \BibitemShut {NoStop}%
\bibitem [{\citenamefont {Cohen}\ and\ \citenamefont {Rabani}(2011)}]{Cohen11}%
  \BibitemOpen
  \bibfield  {author} {\bibinfo {author} {\bibfnamefont {G.}~\bibnamefont
  {Cohen}}\ and\ \bibinfo {author} {\bibfnamefont {E.}~\bibnamefont {Rabani}},\
  }\href {https://doi.org/10.1103/PhysRevB.84.075150} {\bibfield  {journal}
  {\bibinfo  {journal} {Phys. Rev. B}\ }\textbf {\bibinfo {volume} {84}},\
  \bibinfo {pages} {075150} (\bibinfo {year} {2011})}\BibitemShut {NoStop}%
\bibitem [{\citenamefont {Cohen}\ \emph {et~al.}(2013)\citenamefont {Cohen},
  \citenamefont {Gull}, \citenamefont {Reichman}, \citenamefont {Millis},\ and\
  \citenamefont {Rabani}}]{Cohen13}%
  \BibitemOpen
  \bibfield  {author} {\bibinfo {author} {\bibfnamefont {G.}~\bibnamefont
  {Cohen}}, \bibinfo {author} {\bibfnamefont {E.}~\bibnamefont {Gull}},
  \bibinfo {author} {\bibfnamefont {D.~R.}\ \bibnamefont {Reichman}}, \bibinfo
  {author} {\bibfnamefont {A.~J.}\ \bibnamefont {Millis}},\ and\ \bibinfo
  {author} {\bibfnamefont {E.}~\bibnamefont {Rabani}},\ }\href
  {https://doi.org/10.1103/PhysRevB.87.195108} {\bibfield  {journal} {\bibinfo
  {journal} {Phys. Rev. B}\ }\textbf {\bibinfo {volume} {87}},\ \bibinfo
  {pages} {195108} (\bibinfo {year} {2013})}\BibitemShut {NoStop}%
\bibitem [{\citenamefont {Cerrillo}\ and\ \citenamefont
  {Cao}(2014)}]{Cerrillo14}%
  \BibitemOpen
  \bibfield  {author} {\bibinfo {author} {\bibfnamefont {J.}~\bibnamefont
  {Cerrillo}}\ and\ \bibinfo {author} {\bibfnamefont {J.}~\bibnamefont {Cao}},\
  }\href {https://doi.org/10.1103/PhysRevLett.112.110401} {\bibfield  {journal}
  {\bibinfo  {journal} {Phys. Rev. Lett.}\ }\textbf {\bibinfo {volume} {112}},\
  \bibinfo {pages} {110401} (\bibinfo {year} {2014})}\BibitemShut {NoStop}%
\bibitem [{\citenamefont {Pollock}\ and\ \citenamefont
  {Modi}(2018)}]{Pollock18}%
  \BibitemOpen
  \bibfield  {author} {\bibinfo {author} {\bibfnamefont {F.~A.}\ \bibnamefont
  {Pollock}}\ and\ \bibinfo {author} {\bibfnamefont {K.}~\bibnamefont {Modi}},\
  }\href {https://doi.org/10.22331/q-2018-07-11-76} {\bibfield  {journal}
  {\bibinfo  {journal} {{Quantum}}\ }\textbf {\bibinfo {volume} {2}},\ \bibinfo
  {pages} {76} (\bibinfo {year} {2018})}\BibitemShut {NoStop}%
\bibitem [{\citenamefont {Pollock}\ \emph {et~al.}(2022)\citenamefont
  {Pollock}, \citenamefont {Gull}, \citenamefont {Modi},\ and\ \citenamefont
  {Cohen}}]{Pollock22}%
  \BibitemOpen
  \bibfield  {author} {\bibinfo {author} {\bibfnamefont {F.~A.}\ \bibnamefont
  {Pollock}}, \bibinfo {author} {\bibfnamefont {E.}~\bibnamefont {Gull}},
  \bibinfo {author} {\bibfnamefont {K.}~\bibnamefont {Modi}},\ and\ \bibinfo
  {author} {\bibfnamefont {G.}~\bibnamefont {Cohen}},\ }\href
  {https://doi.org/10.21468/SciPostPhys.13.2.027} {\bibfield  {journal}
  {\bibinfo  {journal} {SciPost Phys.}\ }\textbf {\bibinfo {volume} {13}},\
  \bibinfo {pages} {027} (\bibinfo {year} {2022})}\BibitemShut {NoStop}%
\bibitem [{\citenamefont {Cohen}\ \emph {et~al.}(2015)\citenamefont {Cohen},
  \citenamefont {Gull}, \citenamefont {Reichman},\ and\ \citenamefont
  {Millis}}]{Cohen15}%
  \BibitemOpen
  \bibfield  {author} {\bibinfo {author} {\bibfnamefont {G.}~\bibnamefont
  {Cohen}}, \bibinfo {author} {\bibfnamefont {E.}~\bibnamefont {Gull}},
  \bibinfo {author} {\bibfnamefont {D.~R.}\ \bibnamefont {Reichman}},\ and\
  \bibinfo {author} {\bibfnamefont {A.~J.}\ \bibnamefont {Millis}},\ }\href
  {https://doi.org/10.1103/PhysRevLett.115.266802} {\bibfield  {journal}
  {\bibinfo  {journal} {Phys. Rev. Lett.}\ }\textbf {\bibinfo {volume} {115}},\
  \bibinfo {pages} {266802} (\bibinfo {year} {2015})}\BibitemShut {NoStop}%
\end{thebibliography}%

\clearpage
\onecolumngrid
\appendix

\renewcommand\thefigure{S\arabic{figure}}  
\renewcommand\thetable{S\arabic{table}}  
\setcounter{figure}{0}

\section{Application to quantum computer data}

Fig.~\ref{fig:QCdata} shows the greater Green's function obtained 
from IBM quantum computers for the Su-Schrieffer-Heeger (SSH) model
\begin{align}
    \ham = -\sum_{\langle i,j\rangle } \left[ V_{nn} + (-1)^i \delta/2 \right] c^\dagger_i c_j  - \mu\sum_i c^\dagger_i c_i.
    \label{eq:SSHmodel_SI}
\end{align}
This data was originally presented in K\"okc\"u et al.~\cite{kokcu2023linear}.
The figure shows the data for $\delta=0.4$, $\mu=-3$,
and momenta $k \in \lbrace 0,1,2,3,4 \rbrace \frac{2\pi}{8}$.
Note that while the Fourier transform power spectrum shows peaks near the expected values (indicated as vertical dashed lines), there is a substantial amount of noise present.

\begin{figure}[htpb]
    \centering
    \includegraphics[width=0.6\textwidth]{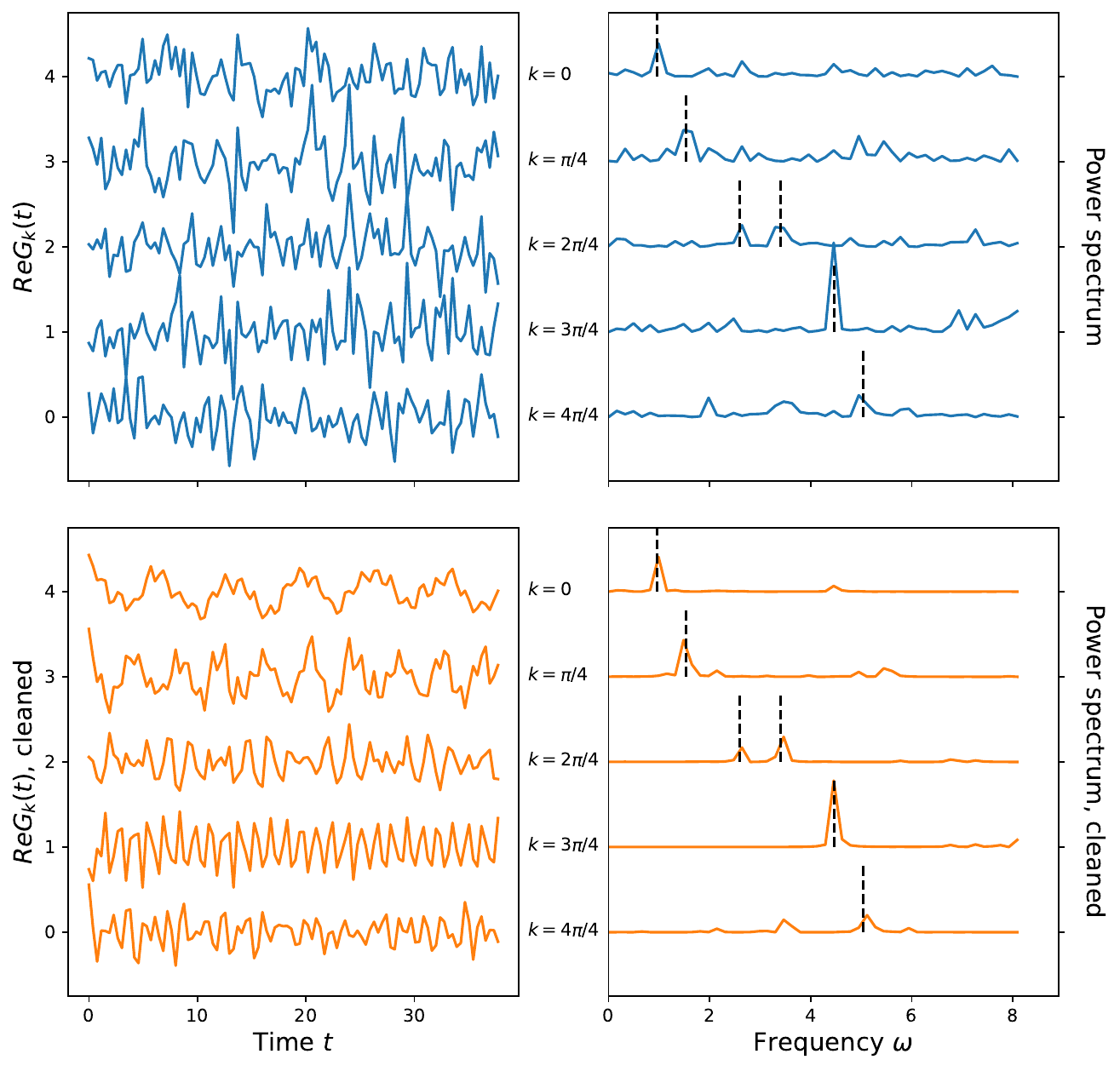}
    \caption{Top panel: Raw data for an 8-site Green's function 
    $\mathrm{Re} G_k(t)$ and corresponding power spectrum
    for the SSH model with $\delta=0.4$ and $\mu=-3$ obtained from IBM's quantum computer {\em ibm\_auckland} (originally from Ref.~\cite{kokcu2023linear}). Bottom panel:
    Data after PSD denoising as discussed below.
    Individual curves are $k$ points. The dashed lines indicate the expected frequencies obtained by analytically solving Eq.~\ref{eq:SSHmodel_SI}.}
    \label{fig:QCdata}
\end{figure}

The Gramian of the Green's function, {\it i.e.} the Toeplitz matrix constructed from the Green's function, must follow the structure of \cref{eq:gmatrix} and be positive semidefinite. In the presence of noise, the Gramian is a Toeplitz matrix but is not positive definite.
Defining a cost function
\begin{align}
    \mathcal{C} = \sum_i \left[ \lambda_i \left(\mathrm{sgn}(\lambda_i) - 1\right) \right]^2
\end{align}
where $\lambda_i$ are the eigenvalues of $G$, we cycle through
the individual entries in $G$ and numerically optimize them with respect to this
cost function. In cases with very large noise, such as with data from quantum computers, we find this procedure to be more reliable than the alternating projection described in the main text.

We apply this approach to the data set from Ref.~\cite{kokcu2023linear}
with the worst
quality, $\delta = 0.4$; the results are shown in 
\cref{fig:QCdata}. For most of the momenta,
the assertion that the Green's function should have the
structure of \cref{eq:gmatrix} yield
quite good results --- the algorithm correctly picks out the
correct peak(s) in the Fourier transform.  The notable exception
is the $k=\pi$, which is because the original data is particularly
poor.  



\section{Application to steady-state Monte Carlo data}

\begin{figure}[htpb]
    \centering
    \includegraphics[angle=270,width=0.49\textwidth]{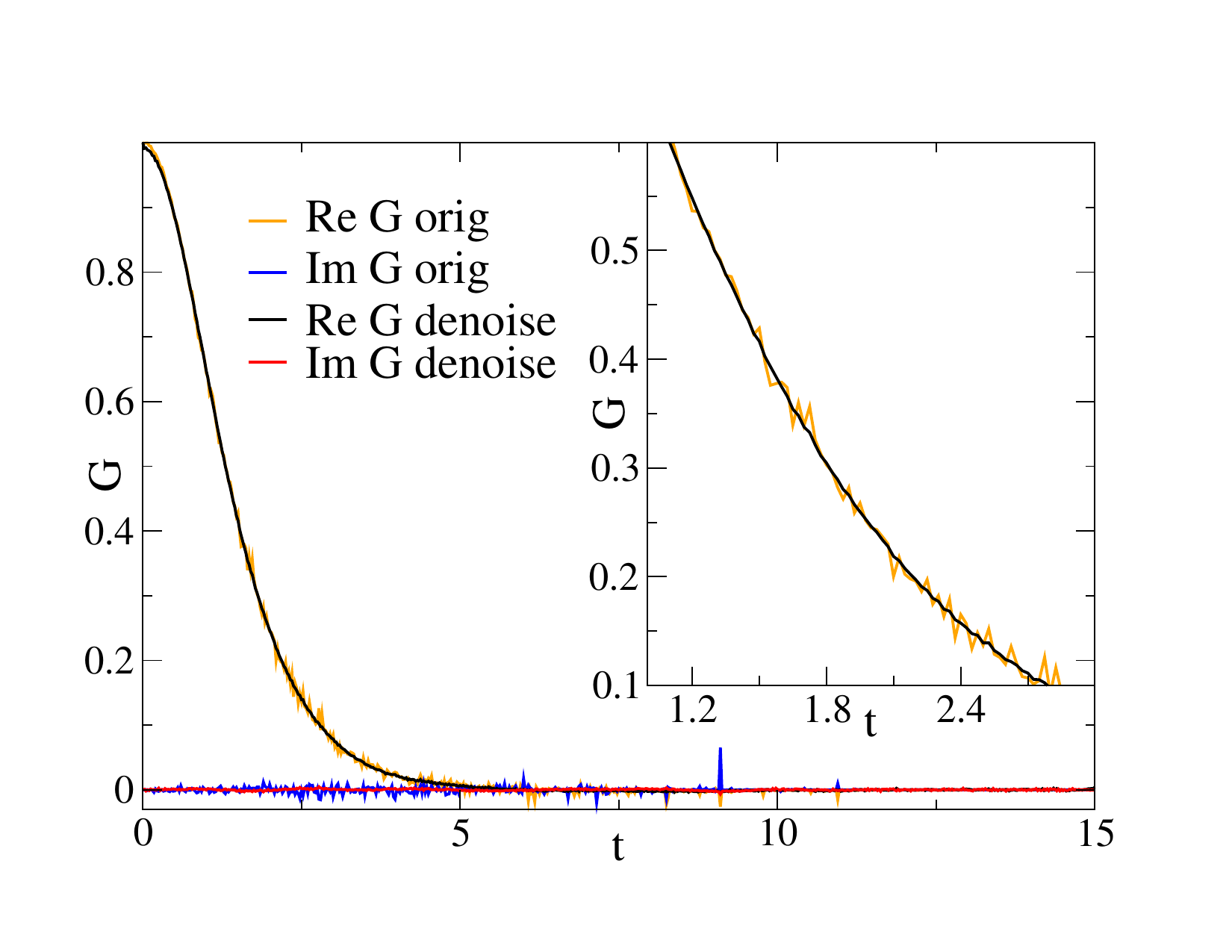}
    \caption{De-noised Quantum Monte Carlo data. Main panel: inchworm \cite{Cohen15} quantum Monte Carlo of a steady-state transport problem (for data and parameters see Fig.~1 of \cite{Erpenbeck23}), real part (orange) and imaginary part (blue). Denoised data, real part (black) and imaginary part(red). Inset: Zoom into the region near t=2 highlighting the effect of denoising.}
    \label{fig:steady_state_qmc}
\end{figure}

In this section, we present de-noised data from a real-time simulation with current state-of-the-art quantum Monte Carlo technology. We use the steady-state formulation \cite{Erpenbeck23} of the Inchworm \cite{Cohen15} Monte Carlo real-time propagation scheme. The method overcomes the dynamical sign problem inherent to many real-time Monte Carlo methods \cite{Gull11RMP,Muhlbacher08} and is therefore able to simulate transport data for long times.

The data presented has Monte Carlo noise whose size differs as a function of time but is largest in the region near $t=2$. Visible are also data `glitches' near $t=9.1$ which are likely due to ergodicity issues in the simulation and which are clearly unphysical.

We employ the alternate projection described in the main text, while keeping the norm fixed to the known analytical value of 1. It is evident that the causal projection onto a positive definite function eliminates much of the Monte Carlo noise and results in a real-time function that is much smoother. Unphysical outliers in the data are eliminated. Obtaining data of the same quality as the de-noised data would require a substantially larger investment of computer time.

\section{The Hubbard Dimer -- Details}
For the data in Fig.~\ref{fig:MCNoise}, we consider a Hubbard dimer (a two-site Hubbard model) with the following Hamiltonian:
\begin{align}
H &= H_0 + H_I,\\
    H_0 &= -\epsilon\sum_{i=1,2}\left(n_{i\uparrow}+n_{i\downarrow}\right) -v \sum_{\sigma=\uparrow,\downarrow} c_{0\sigma}^\dagger c_{1\sigma} + c_{1\sigma}^\dagger c_{0\sigma},\\
    H_I &= U \sum_{i=1,2} \left(n_{i\uparrow}n_{i\downarrow}-\frac{1}{2}(n_{i\uparrow}+n_{i\downarrow})\right)
\end{align}
The operators $c_{i\sigma}$ describe four fermion annihilation operators that annihilate particles on site $i=1,2$ with spin $\sigma=\uparrow,\downarrow$. Their adjoints $c_{i\sigma}^dagger$ create the corresponding particles on site $i$ with spin $\sigma$. These operators correspond to matrices of size  $16\times16$ in Fock space and satisfy the fermion commutation rules.

The parameter $U$ describes the on-site repulsion and is set to $U=5$. $\epsilon$ is an on-site level energy term ($\epsilon=0$ corresponds to half filling), and $v$ describes the hopping between sites $1$ and $2$. At temperature $T=1/\beta=0.1,$ the choice of $\epsilon=2.3$ corresponds to a density of $0.710556$ on all spin-orbitals.

The Green's function discussed in Fig.~\ref{fig:MCNoise} is defined as
\begin{align}
    G_{11\uparrow}(t)=\frac{1}{Z}\text{Tr}\left[e^{(-\beta+it)H} c_{1\uparrow} e^{-itH} c_{1\uparrow}^\dagger\right],
\end{align}
and is a positive definite function that is symmetric in the real part and anti-symmetric in the imaginary part, with $G_{11\uparrow}(t=0)=0.289444.$ (note that the system is degenerate in site and spin indices). The positive definite function corresponds, up to a factor of $-i$, to the spin-up on-site greater Green's function of site $1$ of the system.

\section{Additional Mathematical notes}
In an attempt to make the presentation in the main text accessible to most physicists, we have eliminated some of the more formal mathematical aspects. We would like to highlight the following additional connections:

\subsection{Inner products induce positive definite functions}
We reframe the time dependence of the
operators as a map on $\mathcal{L}(V)$.
Note that the time argument
is an element of the group of real numbers $\mathbb{R}$ in the continuous case. When time is discretized  (e.g. due to a numerical sampling on an equidistant time grid), it is indexed by an element the group
of integers $\mathbb{Z}$; we will generically
use $\mathcal{G}$ to refer to the group. 
The time evolution of the operators in the Heisenberg picture
is a map $\phi_t$ that acts on elements of $\mathcal{L}(V)$. That is, for an operator $A(t') \in \mathcal{L}(V)$ and a time argument $t \in \mathbb{R}$,
\begin{align}
\phi_t :\ &\mathcal{L}(V) \rightarrow \mathcal{L}(V) \\
& \phi_t A(t') = A(t+t')
\end{align}
It is straightforward to show that $\phi_t$ preserves that group structure of the
time argument: $\phi_t\phi_{t'} = \phi_{t+t'}$.  Thus, it is a representation of $\mathcal{G}$.
Moreover, for the inner product in Eq.~\ref{eq:innerproduct}, this map is unitary.
This can be seen from the relations
\begin{align}
G_{AB}(t-t') =  \langle A(t), B(t') \rangle 
=  \langle \phi_t A, \phi_{t'} B \rangle \\
= \langle \phi_{t-t'} A, B \rangle
= \langle A, \phi_{t'-t} B \rangle
\end{align}
which follow from the cyclicity of the trace.
Combining these two points, and because $\phi$ is a unitary representation
of $\mathcal{G}$, we can show the following theorem.
\begin{theorem}
\label{psd-thm}
Consider a correlation function of two operators on
Fock space $\langle A(t_i), A(t_j) \rangle$.
This correlation function is a positive definite function.
\end{theorem}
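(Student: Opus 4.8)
The plan is to reduce the statement to the positivity of the inner product in Eq.~\ref{eq:innerproduct}, exactly as in the computation already carried out around Eq.~\ref{eq:psd}, and then to upgrade it to a statement about a positive definite function on the group $\mathcal{G}$ using that $\phi$ is a unitary representation.

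First I would recall the working definition of a positive definite function: fixing a finite collection of time points $t_1,\dots,t_n$ and arbitrary complex coefficients $\lambda_1,\dots,\lambda_n$, it suffices to show that $\sum_{i,j}\langle A(t_i),A(t_j)\rangle\,\lambda_i^*\lambda_j\ge 0$. I would then expand $\langle A(t_i),A(t_j)\rangle=\mathrm{Tr}[\rho\,A(t_i)^\dagger A(t_j)]$, use linearity of the trace together with linearity of the inner product in its second slot and conjugate-linearity in its first, and collect the double sum into a single inner product $\big\langle v,v\big\rangle$ with $v:=\sum_i\lambda_i A(t_i)\in\mathcal{L}(V)$. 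The inequality then follows immediately from the positivity property $\langle v,v\rangle\ge 0$ established earlier from the facts that $\rho$ is Hermitian positive semi-definite and that the trace of a product of two positive semi-definite matrices is nonnegative.

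Second, to make contact with the group structure and with the sense of ``positive definite function'' used in Bochner's theorem, I would invoke $G_{AA}(t_i-t_j)=\langle A(t_i),A(t_j)\rangle=\langle\phi_{t_i}A,\phi_{t_j}A\rangle=\langle A,\phi_{t_j-t_i}A\rangle$, which holds because $\phi$ preserves the group law and is unitary for the inner product in Eq.~\ref{eq:innerproduct}. Writing $F(g):=\langle A,\phi_g A\rangle$ for $g\in\mathcal{G}$, the same regrouping gives $\sum_{i,j}F(t_j-t_i)\,\lambda_i^*\lambda_j=\big\|\sum_i\lambda_i\phi_{t_i}A\big\|^2\ge 0$, so $F$ is a positive definite function on $\mathcal{G}$; this is the general fact that the matrix coefficient $g\mapsto\langle v,\phi_g v\rangle$ of a unitary representation is positive definite.

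I do not expect a genuine obstacle: the content is essentially the one-line identity turning a double sum into a squared norm. The only points requiring care are bookkeeping --- keeping the conjugation on the correct index and getting the direction of the group action ($t_i-t_j$ versus $t_j-t_i$) consistent with the conjugate-symmetry convention of Eq.~\ref{eq:innerproduct} --- and, on the more careful side, noting that the positivity input $\langle v,v\rangle\ge 0$ genuinely holds in the degenerate case $v=0$, which accounts for the zero eigenvalues permitted by the standard convention for positive definite functions.
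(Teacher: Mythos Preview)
Your proposal is correct and follows essentially the same route as the paper: both arguments regroup $\sum_{i,j}\langle A(t_i),A(t_j)\rangle\,\lambda_i^*\lambda_j$ into the inner product of $v=\sum_i\lambda_i A(t_i)$ with itself and invoke the positivity of $\langle\cdot,\cdot\rangle$. Your second paragraph, recasting the result as positive definiteness of the matrix coefficient $g\mapsto\langle A,\phi_g A\rangle$ of the unitary representation $\phi$, is not part of the paper's formal proof of the theorem but matches the surrounding discussion in the appendix and main text.
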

\begin{proof}
For a positive definite function evaluated on an arbitrary set of time points $t$, $f(t_i,t_j)$, the relation 
\begin{align}
    \sum_{i,j}f(t_i,t_j) \lambda_i^* \lambda_j \geq 0
\end{align}
holds for any complex vector $\lambda$. Expanding out the
definition of $\langle A(t_i), A(t_j) \rangle$, it
is straightforward to show that
\begin{align}
    \langle A(t_i), A(t_j) \rangle = \mathrm{Tr}\left[\rho\left(
    \sum_i \lambda_i \phi_i(A)\right)^\dagger \left( \sum_j \lambda_j \phi_j(A)\right)
    \right],
\end{align}
which is the inner product of an element of $\mathcal{L}(V)$ with itself and thus must be $\ge 0$.
\end{proof}
Thus,
$G_{AA}(t-t')$ is a positive definite function.

\subsection{Connection to Reproducing Kernel Hilbert Spaces}
Inner products induce reproducing kernel Hilbert spaces. Given a Hilbert space $\mathcal{L}$ with inner product $\langle \cdot ,\cdot \rangle$, the function $K(x,y)=\langle x, y \rangle$ is a kernel function and determines a corresponding reproducing kernel Hilbert space \cite{paulsen_raghupathi_2016}
on the space of functions  $\mathbb{R} \rightarrow \mathbb{C}$.
Reproducing kernel Hilbert spaces form an important tool in statistics and machine learning, with connections to interpolation and approximation theory. This connection will be further examined in a follow-up paper.
\end{document}